 % This is samplepaper.tex, a sample chapter demonstrating the
% LLNCS macro package for Springer Computer Science proceedings;
% Version 2.20 of 2017/10/04
%
\documentclass[runningheads]{llncs}
\usepackage{amsmath,amssymb}
\usepackage{graphicx}
\usepackage{tikz}
\usepackage{cite}
\usepackage{xspace}
\usepackage{thmtools,thm-restate}
% Used for displaying a sample figure. If possible, figure files should
% be included in EPS format.
%
% If you use the hyperref package, please uncomment the following line
% to display URLs in blue roman font according to Springer's eBook style:
% \renewcommand\UrlFont{\color{blue}\rmfamily}

\begin{document}
\title{Reachability Is NP-Complete Even for the Simplest Neural Networks}
%
%\titlerunning{Abbreviated paper title}
% If the paper title is too long for the running head, you can set
% an abbreviated paper title here
%
\author{Marco S\"alzer\orcidID{0000-0002-8012-5465} \and Martin Lange\orcidID{0000-0002-1621-0972}}
\authorrunning{M. S\"alzer and M. Lange}
% First names are abbreviated in the running head.
% If there are more than two authors, 'et al.' is used.
%
\institute{School of Electr. Eng. and Computer Science, University of Kassel, Germany\\
\url{https://www.uni-kassel.de/eecs/fmv} \\
\email{\{marco.saelzer,martin.lange\}@uni-kassel.de}}
\maketitle              % typeset the header of the contribution
\begin{abstract} 	
We investigate the complexity of the reachability problem for (deep) neural networks: does it compute valid output given some 
valid input? It was recently claimed that the problem is NP-complete for general neural networks and conjunctive input/output 
specifications. We repair some flaws in the original upper and lower bound proofs. We then show that NP-hardness already holds
for restricted classes of simple specifications and neural networks with just one layer, as well as neural networks with minimal 
requirements on the occurring parameters.

\keywords{machine learning \and computational complexity \and formal specification and verification}
\end{abstract}
%
%
%
% !TEX root =  main.tex

\newcommand{\notg}{\textsc{not}}
\newcommand{\org}{\textsc{or}}
\newcommand{\orgA}{\ensuremath{\org_A}}
\newcommand{\orgB}{\ensuremath{\org_B}}
\newcommand{\andg}{\textsc{and}}
\newcommand{\boolg}{\textsc{bool}}
\newcommand{\boolrepairedg}{\ensuremath{\textsc{bool}^*}}
\newcommand{\normg}{\textsc{norm}}
\newcommand{\normnotg}{\ensuremath{\overline{\textsc{norm}}}}
\newcommand{\inverseeqg}{\ensuremath{\textsc{eq}^0}}
\newcommand{\discreteg}{\textsc{disc}}
\newcommand{\reach}{\textsc{NNReach}\xspace}
\newcommand{\threesat}{\textsc{3sat}\xspace}
\newcommand{\twosat}{\textsc{2sat}\xspace}
\newcommand{\neuralthreesat}{\ensuremath{\mathcal{C}(\threesat)}}
\newcommand{\neuraltwosat}{\ensuremath{\mathcal{C}(\twosat)}}

\newcommand{\Nat}{\ensuremath{\mathbb{N}}}
\newcommand{\Real}{\ensuremath{\mathbb{R}}}
\newcommand{\Rat}{\ensuremath{\mathbb{Q}}}

\newcommand{\ReLU}{\ensuremath{\mathop{\mathit{ReLU}}}}

\renewcommand{\epsilon}{\varepsilon}

% !TEX root =  ../main.tex

\section{Introduction}
Deep learning has proved to be very successful for highly challenging or even otherwise intractable tasks in a broad range of applications
such as image recognition \cite{KrizhevskySH17} or natural language processing \cite{X12a} but also safety-critical applications like
autonomous driving \cite{GrigorescuTCM20}, medical applications \cite{LitjensKBSCGLGS17}, or financial matters \cite{DixonKB17}. These
naturally come with safety concerns and the need for certification methods. Recent such methods can be divided into 
(\textsc{i}) Adversarial Attack and Defense, (\textsc{ii}) Testing, and (\textsc{iii}) Formal Verification. A comprehensive survery about all three categories is given in \cite{HuangKRSSTWY20}.

The former two cannot guarantee the absence of errors. Formal verification of neural networks (NN) is a relatively new area of research which
ensures completeness of the certification procedure. Recent work on sound and complete verification algorithms for NN
are mostly concerned with efficient solutions to their reachability problem \reach{} 
\cite{KatzBDJK17, Ehlers17, NarodytskaKRSW18, BunelTTKM18}:
given an NN and symbolic specifications of valid inputs and outputs, decide whether there is some valid input such that the 
corresponding output is valid, too. This corresponds to the understanding of reachability in classical software verification: valid sets of inputs and outputs are specified and the question is whether there is a 
valid input that leads to a valid output. Put differently, the question is whether the set of valid outputs is reachable from the set of valid
inputs. The difference to classical reachability problems in discrete state-based programs is that there reachability is a matter of
\emph{lengths} of a connection. In NN this is given by the number of layers, and it is rather the \emph{width} of the continuous
state space which may cause unreachability. 

Solving \reach{} is interesting for practical purposes. An efficient algorithm can be used to ensure that no input from some specified 
set of inputs is misclassified or that some undesired class of outputs is never reached. In applications like autonomous-driving, where
classifiers based on neural networks are used to make critical decisions, such safeguards are indispensable. 

However, all known algorithms for \reach{} show the same drawback: a lack of scalability to networks of large size which, unfortunately, 
are featured typically in real-world applications \cite{KhanSZQ20}. This is not a big surprise as the problem is NP-complete. This
result was proposed by Katz et al.\ \cite{KatzBDJK17} for NN with ReLU and identity activations, and later also by Ruan et al.\ 
\cite{RuanHK18}. While there is no reason to doubt the NP-completeness claim, the proofs are not stringent and contain flaws.

The argument for the upper bound in \cite{KatzBDJK17} misses the fact that real inputs are not necessarily polynomially bounded 
in size. In fact, guessing values in $\Real$ is not even effective without a bound on the size of their representation. Such a bound
is closely linked to the question whether such values can be approximated upto some precision. The proof by Katz et al.\ makes no
argument for any bound on the representation of such values, let alone a polynomial one. \footnote{While this paper was being processed, Katz et al. published an extended version of their original paper \cite{Katz21}. Unfortunately, the flaws concerning the upper bound are still present in this version.}  

The arguments for the lower bound by a reduction from \threesat{} in \cite{KatzBDJK17} and \cite{RuanHK18_arxiv_version} rely on a
      discretisation of real values to model Boolean values. This does not work for the signum function $\sigma$ used by Ruan et al.\ 
      as it is not congruent for sums: e.g.\ $\sigma(-3) = \sigma(-1)$ but $\sigma(2 + (-3)) \ne \sigma(2 + (-1))$, showing that one 
      cannot simply interpret any negative number as the Boolean value \emph{false} etc. As a consequence, completeness of the 
      construction fails as there are (real) solutions to \reach{} which do not correspond to (discrete) satisfying \threesat assignments. 
      Katz et al.\ seem to be aware of this and use a slightly more elaborate discretisation in their reduction, but unfortunately it
      still suffers from similar problems. \footnote{These problems are repaired in \cite{Katz21}, but in a slightly different way than we do.}  

We start our investigations into the complexity of \reach{} by fixing these issues in Sec.~\ref{sec:np_compl}. We provide a different
argument for membership in NP which shows that the need for nondeterminism is not to be sought in the input values but in the use
of ReLU nodes. As a corollary we obtain polynomial decidability for NN with a bounded number of such nodes. We also address the issue
of discretisation of real values in the lower bound proof, fixing the construction given by Katz et al. We do not address the one by 
Ruan et al.\ further, as this does not provide any further insights or new results.  

We then observe that 
the reduction from \threesat constructs a very specific class of \reach{} instances which we call \neuralthreesat{}. NN from this class have a fixed amount of 
layers but scaling input and output dimension as well as layer size. This raises the question whether, in comparison to the networks 
from \neuralthreesat{}, reducing the amount of layers or fixing dimensionality leads to a class of networks for which \reach{} is 
efficiently solvable. In Sec.~\ref{sec:low_compl_fragments} we show that the answer to this is mostly negative: NP-hardness of \reach{} 
holds for NN with just one layer and an output dimension of one. While this provides minimal requirements on the structure of NN for 
\reach{} to be NP-hard, we also give minimal criteria on the weights and biases in NN for NP-hardness to hold. Thus, the computational
difficulty of \reach{} in the sense of NP-completeness is quite robust. The requirements on the structure or parameters of an NN 
that are needed for NP-hardness to occur are easily met in practical applications. 
Due to space restrictions, some technical proof details are deferred to the appendix. 

We conclude in Sec.~\ref{sec:conclusion} with references to possible future work. 
% !TEX root =  ../main.tex

\section{Preliminaries}
\label{sec:prelim}

\begin{definition}
	A \emph{neural network} (NN) $N$ is a layered graph that represents a function of type $\bbbr^n \rightarrow \bbbr^m$.
	
	The first layer $l=0$ is called the input layer and consists of $n$ nodes. The $i$-th node computes the output $y_{0i} = x_i$ where $x_i$ is the $i$-th input to the overall network. Thus, the output of the input layer  $(y_{00}, \dotsc, y_{0(n-1)})$ is identical to the input of $N$.
	
	A  layer $1 \leq l \leq L-2$ is called \emph{hidden} and consists of $k$ nodes. Note that $k$ must not be uniform across the hidden layers of $N$. Then, the $i$-th node of layer $l$ computes the output $y_{li} = \sigma_{li}(\sum_j c^{(l-1)}_{ji}y_{(l-1)j} + b_{li})$ where $j$ iterates over the output dimensions of the previous layer, $c^{(l-1)}_{ji}$ are real constants which are called \emph{weights}, $b_{li}$ is a real constant which is called \emph{bias} and $\sigma_{li}$ is some (typically non-linear) function called \emph{activation}. The outputs of all nodes of layer $l$ combined gives the output $(y_{l0}, \dotsc, y_{l(k-1)})$ of the hidden layer.
	
	The last layer $l = L-1$ is called the \emph{output layer} and consists of $m$ nodes. The $i$-th node computes an output $y_{(L-1)i}$ in the same way as a node in a hidden layer. The output of the output layer $(y_{(L-1)0}, \dotsc, y_{(L-1)(m-1)})$ is considered as the output of the network $N$.
\end{definition}

The output of a neural network $N$ under input $\boldsymbol{x}$ is denoted $N(\boldsymbol{x})$. If a node in a layer $l > 0$ has less inputs than there are outputs in layer $l-1$ then we assume that the unconsidered outputs of $l-1$ are weighted with zero. We only consider networks where nodes in hidden layers have the identity or the ReLU function, and nodes in the output layer have the identity as activation. The \emph{ReLU function} is defined as $x \mapsto \max(0,x)$. Nodes with ReLU or identity activation are called ReLU nodes or identity nodes, respectively. Given some input to the NN, we say that a ReLU node is \emph{active}, resp.\ \emph{inactive} if the input for its activation function is greater, resp.\ less than or equal to $0$. We visualize an NN as a directed graph with weighted edges. An example is given in Fig.~\ref{sec:prelim;fig:nn_visualized}.
\begin{figure}[t]
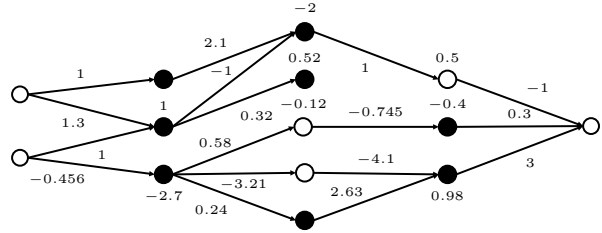

	\centering
	\include{graphics/prelim_network_schema}
	\caption{Schema of a neural network with five layers, input dimension of two and output dimension of one. Filled nodes are ReLU nodes, empty nodes are identity nodes. An edge between two nodes $u$ and $v$ with label $w$ denotes that the output of $u$ is weighted with $w$ in the computation of $v$. No edge between $u$ and $v$ implies $w=0$. The bias of a node is depicted by a value above or below the node. If there is no such value then the bias is zero.}
	\label{sec:prelim;fig:nn_visualized}
\end{figure}

Our main interest lies in the validity of specifications over the output values of NN given specifications over their input values. These specifications are expressed as conjunctions of linear constraints on the input and output variables of a network.

\begin{definition}
	A \emph{specification} $\varphi$ for a given set of variables $X$ is defined by the following grammar:
	\begin{align*}
		\varphi &::= \varphi \land \varphi \mid t \leq b \qquad &
		t &::= c \cdot x \mid t + t
	\end{align*}
	where $b,c$ are rational constants and $x \in X$ is a variable.
\end{definition}
We use $t \geq b$ and $t = b$ as syntactic sugar for $-t \leq -b$ and $t \leq b \land -t \leq -b$. Furthermore, we use $\top$ for $x + (-x) = 0$ and $\bot$ for $x + (-x) = 1$ where $x$ is some variable. We call a specification $\varphi$ \textit{simple} if for all $t \leq b$ it holds that $t = c\cdot x$ for some rational constant $c$ and variable $x$.

\begin{definition}
Specification $\varphi(x_0, \dotsc, x_{n-1})$ is \emph{true} under $\boldsymbol{x} = (r_0, \dotsc, r_{n-1}) \in \Real^n$
if each inequality in $\varphi$ is satisfied in real arithmetic with each $x_i$ set to $r_i$.
\end{definition}
We write $\varphi(\boldsymbol{x})$ for the application of $\boldsymbol{x}$ to the variables of $\varphi$. % and assume that it is clear from context which value is set to which variable. 
If there are less variables in $\varphi$ than dimensions in $\boldsymbol{x}$ we ignore the additional values of $\boldsymbol{x}$. If we
consider a specification $\varphi$ in context of a neural network $N$ we call it an \emph{input or output specification} and assume that 
the set of variables occurring in $\varphi$ is a subset of the input respectively output variables of $N$.

\begin{definition}
	The decision problem \reach{} is the following: given a neural network $N$, 
	input specification $\varphi_{\text{in}}(x_0, \dotsc, x_{n-1})$ and output specification $\varphi_{\text{out}}(y_0, \dotsc, y_{m-1})$, is there $\boldsymbol{x} \in \bbbr^n$ such that $\varphi_{\text{in}}(\boldsymbol{x})$ and $\varphi_{\text{out}}(N(\boldsymbol{x}))$ are true?
\end{definition} 
% !TEX root =  ../main.tex

\section{\reach is NP-Complete}
\label{sec:np_compl}

\subsection{Membership in NP}
The argument used by Katz et al.\ to show membership of \reach{} in NP can be summarized as follows: nondeterministically guess an input
vector $\boldsymbol{x}$ as a witness, compute the output $N(\boldsymbol{x})$ of the network and check that 
$\varphi_{\text{in}}(\boldsymbol{x}) \land \varphi_{\text{out}}(N(\boldsymbol{x}))$ holds. It is indisputable that the computation and 
check of this procedure are polynomial in the size of $N$, $\varphi_{\text{in}}$, $\varphi_{\text{out}}$ \emph{and} the size of 
$\boldsymbol{x}$. However, for inclusion in NP we also need the size of $\boldsymbol{x}$ to be polynomially bounded in the size of
the instance given as $(N,\varphi_{\text{in}},\varphi_{\text{out}})$. There may be an argument for this, for instance based on the
correspondence between size of $\boldsymbol{x}$ and required approximation precision for such values. However, we are not aware of
such an argument, let alone a striking one, and there is also a simpler way of obtaining the upper bound.

\begin{definition}
A \emph{ReLU-linear program} over a set $X = \{x_0,\ldots,x_{n-1}\}$ of variables is a set $\Phi$ of (in-)equalities of the form
\begin{displaymath}
b_j + \sum_{i=1}^m c_{ji} \cdot x_{ji} \le x_j \qquad \text{or} \qquad \ReLU(b_j + \sum_{i=1}^m c_{ji} \cdot x_{ji}) = x_j
\end{displaymath}
where $x_{ji}, x_j \in X$ and $c_{ji},b_j \in \Rat$. Equations of the second form are
called \emph{ReLU-equations}. A \emph{solution} to $\Phi$ is a vector $\boldsymbol{x} \in \Real^n$ which satisfies all (in-)equalities when each variable $x_i \in X$ is replaced by $\boldsymbol{x}(i)$. A ReLU-equality $\ReLU(b_j + \sum_{i=1}^m c_{ji} \cdot x_{ji}) = x_j$ is satisfied by $\boldsymbol{x}$ if
\begin{itemize}
\item $b_j + \sum_{i=1}^m c_{ji} \cdot x_{ji} \ge 0$ and $x_j = b_j + \sum_{i=1}^m c_{ji} \cdot x_{ji}$, or
\item $b_j + \sum_{i=1}^m c_{ji} \cdot x_{ji} \le 0$ and $x_j = 0$. 
\end{itemize}
The problem of \emph{solving} a ReLU-linear program is: given $\Phi$, decide whether there is a solution to it.
\end{definition}

Any ReLU-linear program without ReLU-equalities is a linear program in the usual sense, and linear programs are known to
be solvable in polynomial time \cite{Karmarkar84}.

\begin{restatable}{lemma}{lemrelulp}
\label{thm:relulinprognp}
The problem of solving a ReLU-linear program is in NP.
\end{restatable}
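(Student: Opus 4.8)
The plan is to observe that a $\ReLU$-linear program collapses to an ordinary linear program once one commits to an \emph{activation pattern} --- a choice, for each $\ReLU$-equality, of which of its two defining cases applies --- and then to invoke polynomial-time solvability of linear programs~\cite{Karmarkar84}. So the NP algorithm I would give is: nondeterministically guess the activation pattern, deterministically build the corresponding linear program, and decide its feasibility in polynomial time.

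Concretely, first I would let $J$ index the $\ReLU$-equalities $\ReLU(b_j + \sum_i c_{ji} x_{ji}) = x_j$ occurring in $\Phi$, noting $|J| \le |\Phi|$, and guess a bit vector $a \in \{0,1\}^{|J|}$. Then I would form a linear program $\Phi_a$ by keeping every plain inequality of $\Phi$ unchanged and replacing, for $j \in J$, the $\ReLU$-equality by the linear constraints $b_j + \sum_i c_{ji} x_{ji} \ge 0$ and $x_j = b_j + \sum_i c_{ji} x_{ji}$ when $a_j = 1$, or by $b_j + \sum_i c_{ji} x_{ji} \le 0$ and $x_j = 0$ when $a_j = 0$. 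Since $\Phi_a$ has size polynomial in $|\Phi|$ and rational coefficients, its feasibility can be decided in polynomial time, and I would accept exactly when some branch yields a feasible $\Phi_a$.

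For correctness I would argue both directions. If $\Phi$ has a solution $\boldsymbol{x}$, then for each $j \in J$ at least one of the two disjuncts in the definition of the $\ReLU$-equality holds under $\boldsymbol{x}$; setting $a_j$ accordingly (either value when both hold, which is exactly the borderline case $b_j + \sum_i c_{ji} x_{ji} = 0$) makes $\boldsymbol{x}$ satisfy $\Phi_a$, so that computation branch accepts. Conversely, a solution of any feasible $\Phi_a$ satisfies all plain inequalities of $\Phi$ and, by construction of $\Phi_a$, the correct disjunct of each $\ReLU$-equality, hence solves $\Phi$.

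I do not expect a genuine obstacle here; the construction is routine. The one point to get right is that the guessed object is polynomially bounded, which is immediate since a single bit per $\ReLU$-equality suffices. It is worth remarking that this argument localises the nondeterminism to the $\ReLU$-nodes, so that bounding their number yields polynomial-time solvability by exhausting all activation patterns --- which I expect the paper to record as a corollary.
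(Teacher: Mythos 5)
Your proposal is correct and follows essentially the same route as the paper's own proof: guess one bit per $\ReLU$-equality to fix the activation pattern, replace each $\ReLU$-equality by the corresponding pair of linear constraints, and decide feasibility of the resulting linear program in polynomial time, with the same two-directional correctness argument (including the observation that on the borderline case either bit works). Your closing remark about bounding the number of $\ReLU$ nodes is indeed recorded by the paper as a corollary.
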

\begin{proof}
	Suppose a ReLU-linear program $\Phi$ with $l$ ReLU-equalities is given. Existence of a solution can be decided as follows. Guess, for each ReLU-equation $\chi_k$ of the form $\ReLU(b_j + \sum_{i=1}^m c_{ji} \cdot x_{ji}) = x_j$, some $a_k \in \{0,1\}$. Let $\boldsymbol{a} = (a_0,\ldots,a_{l-1})$. Next, let $\Phi_{\boldsymbol{a}}$ result from $\Phi$ by replacing each $\chi_k$ by the following (in-)equalities.
	\begin{align*}
		& b_j + \sum_{i=1}^m c_{ji} \cdot x_{ji} \ge 0 \enspace, \enspace b_j + \sum_{i=1}^m c_{ji} \cdot x_{ji} = x_j && \text{if } a_k = 1 \\
		& b_j + \sum_{i=1}^m c_{ji} \cdot x_{ji} \le 0\enspace, \enspace x_j = 0 && \text{if } a_k = 0  
	\end{align*}
	The following is not hard to see: (\textsc{i}) Using standard transformations, $\Phi_{\boldsymbol{a}}$ can be turned into a linear 
	program of size linear in $\Phi$. (\textsc{ii}) Any solution to $\Phi_{\boldsymbol{a}}$ is also a solution to $\Phi$, (\textsc{iii}) If $\Phi$ has a solution, then there is $\boldsymbol{a} \in \{0,1\}^l $ such that $\Phi_{\boldsymbol{a}}$
	has a solution. This can be created as follows. Let $\boldsymbol{x}$ be a solution to $\Phi$. For each ReLU-equation $\chi_k$ as
	above, let $a_k = 1$ if the corresponding sum is non-negative, otherwise let $a_k=0$. Then $\boldsymbol{x}$ is also a solution for
	$\Phi_{\boldsymbol{a}}$. Thus, ReLU-linear programs can be solved in nondeterministic polynomial time by guessing $\boldsymbol{a}$, and then constructing
	the linear program $\Phi_{\boldsymbol{a}}$ and solving it. \qed 
\end{proof}

With this definition of a ReLU-linear program and the corresponding lemma at hand, we are set to prove NP-membership of \reach.
\begin{theorem}
\label{thm:reachinnp}
\reach is in NP.
\end{theorem}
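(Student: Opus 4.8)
The plan is to reduce \reach{} to the problem of solving a ReLU-linear program and then invoke Lemma~\ref{thm:relulinprognp}. Given an instance $(N,\varphi_{\text{in}},\varphi_{\text{out}})$ of \reach{}, I would introduce one program variable $x_v$ for every node $v$ of $N$; in particular the variables $x_0,\dots,x_{n-1}$ for the input nodes are identified with the variables that may occur in $\varphi_{\text{in}}$, and the variables $y_0,\dots,y_{m-1}$ for the output nodes with those of $\varphi_{\text{out}}$. The ReLU-linear program $\Phi$ then collects: for every ReLU node, the ReLU-equation $\ReLU\bigl(b_v + \sum_j c_j \cdot x_{u_j}\bigr) = x_v$ expressing its computation from the outputs $x_{u_j}$ of the preceding layer; for every identity node, the pair of inequalities encoding the linear equality $x_v = b_v + \sum_j c_j \cdot x_{u_j}$; and, finally, the (in-)equalities of $\varphi_{\text{in}}$ and of $\varphi_{\text{out}}$, which by definition of a specification are already conjunctions of linear constraints over the input resp.\ output variables.

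A minor point is that plain linear equalities and the specification constraints $\sum_j c_j x_j \le b$ must be brought into the restricted shape $b_j + \sum_i c_{ji} x_{ji} \le x_j$ admitted by a ReLU-linear program, where the left-hand side is affine and the right-hand side a single variable. This is handled by adding one auxiliary variable $z$ together with the trivial ReLU-equation $\ReLU(0) = z$, which forces $z=0$: any constraint $e \le e'$ with $e,e'$ affine is then written as $(e-e') \le z$, and an equality is split into two such inequalities. The whole construction is clearly computable in polynomial time and produces a $\Phi$ of size linear in that of $(N,\varphi_{\text{in}},\varphi_{\text{out}})$, with exactly one ReLU-equation per ReLU node of $N$ (plus the trivial one for $z$) — which is also what yields the announced corollary for networks with a bounded number of ReLU nodes, by deterministically solving the finitely many resulting linear programs.

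It then remains to check that $\Phi$ has a solution if and only if the \reach{} instance is positive. In one direction, a witness $\boldsymbol{x}\in\Real^n$ for reachability extends uniquely to an assignment of all node variables by evaluating $N$ on $\boldsymbol{x}$ layer by layer, and this assignment satisfies every node (in-)equality of $\Phi$ as well as the constraints coming from $\varphi_{\text{in}}(\boldsymbol{x})$ and $\varphi_{\text{out}}(N(\boldsymbol{x}))$; here the two cases in the semantics of a ReLU-equation mirror precisely the active/inactive behaviour of a ReLU node, the overlap at pre-activation value $0$ being harmless since both cases then yield $0$. Conversely, any solution of $\Phi$ restricted to $x_0,\dots,x_{n-1}$ is such a witness, because the node (in-)equalities, read from the input layer towards the output layer along the layered (hence acyclic) structure of $N$, force every $x_v$ to equal the value that $v$ computes on that input. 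Hence $\Phi$ is a valid many-one reduction, and membership of \reach{} in NP follows from Lemma~\ref{thm:relulinprognp}. The step that needs the most attention is exactly this equivalence: verifying that the ReLU-equation semantics faithfully captures ReLU nodes and that the layered structure pins all node variables to their intended values; the size bookkeeping and the syntactic massaging of the constraints are routine.
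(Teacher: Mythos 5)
Your proposal is correct and follows essentially the same route as the paper: translate the \reach{} instance into a ReLU-linear program with one variable per node, one ReLU-equation per ReLU node, linear equalities for identity nodes and the specification conjuncts, and then invoke Lemma~\ref{thm:relulinprognp}. Your auxiliary variable $z$ with $\ReLU(0)=z$ is a clean way to make the constraints fit the syntactic shape required by the definition, a detail the paper only waves at with ``standard transformations''.
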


\begin{proof}
Let $\mathcal{I} = (N,\varphi_{\text{in}},\varphi_{\text{out}})$. We construct a ReLU-linear program $\Phi_{\mathcal{I}}$ of size 
linear in $|N|+|\varphi_{\text{in}}|+|\varphi_{\text{out}}|$ which is solvable iff there is a solution for
$\mathcal{I}$. The ReLU-linear program $\Phi_{\mathcal{I}}$ contains the following (in-)equalities.
\begin{itemize}
\item $\varphi_{\text{in}}$ and $\varphi_{\text{out}}$ (with each conjunct seen as one (in-)equality),
\item for each non-ReLU node $v_{li}$ computing $\sum_j c^{(l-1)}_{ji}y_{(l-1)j} + b_{li}$ add the equality
      $\sum_j c^{(l-1)}_{ji}y_{(l-1)j} + b_{li} = y_{li}$ (in the form of two inequalities of appropriate form),
\item for each ReLU node $v_{li}$ computing $\ReLU(\sum_j c^{(l-1)}_{ji}y_{(l-1)j} + b_{li})$ add the ReLU-equality
      $\ReLU(\sum_j c^{(l-1)}_{ji}y_{(l-1)j} + b_{li}) = y_{li}$.
\end{itemize} 
The claim on the size of $\Phi_{\mathcal{I}}$ should be clear. Moreover, note that a solution $\boldsymbol{x}$ to $\mathcal{I}$ can be extended to an assignment $\boldsymbol{x}'$ of real values at every node of $N$, including values
$\boldsymbol{y}$ for the output nodes of $N$ s.t., in particular $N(\boldsymbol{x}) = \boldsymbol{y}$. Then $\boldsymbol{x}'$ is a solution
to $\Phi_{\mathcal{I}}$. Likewise, a solution to $\Phi_{\mathcal{I}}$ can be turned into a solution to $\mathcal{I}$ by projection
to the input variables. 

Hence, \reach polynomially reduces to the problem of solving ReLU-linear programs which, by Lem.~\ref{thm:relulinprognp} is in NP.
\qed
\end{proof}

It is interesting to point out the role of witnesses for positive instances of the \reach problem: it is tempting to regard values 
to the input nodes of the NN as potential witnesses as done by Katz et al.\ but, as mentioned before, for as long as there is no 
argument for their
polynomial boundedness these are \emph{not} suitable witnesses in an NP procedure. Instead, Thm.~\ref{thm:reachinnp} above shows
that an assignment to the ReLU nodes as being in-/active can serve as such a witness. This immediately yields a polynomial fragment
of \reach.

\begin{corollary}
The reachability problem for NN with a bounded number of ReLU nodes is decidable in polynomial time.
\label{sec:np_compl;cor:relu_nodes}  
\end{corollary}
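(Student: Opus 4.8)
\emph{Proof proposal.} The plan is to reuse the reduction constructed in the proof of Thm.~\ref{thm:reachinnp} and to observe that, once the number of ReLU nodes is bounded by a constant, the nondeterminism in the algorithm underlying Lem.~\ref{thm:relulinprognp} disappears, so it can be removed by exhaustive search.

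Concretely, fix a constant $k$ and let $\mathcal{I} = (N,\varphi_{\text{in}},\varphi_{\text{out}})$ be an instance of \reach in which $N$ has at most $k$ ReLU nodes. First I would build the ReLU-linear program $\Phi_{\mathcal{I}}$ exactly as in the proof of Thm.~\ref{thm:reachinnp}; it has size linear in $|\mathcal{I}|$ and, by construction, exactly one ReLU-equality per ReLU node of $N$, hence at most $l \le k$ of them. Next, rather than \emph{guessing} a vector $\boldsymbol{a} \in \{0,1\}^l$ as in the proof of Lem.~\ref{thm:relulinprognp}, I would \emph{enumerate} all such vectors; there are at most $2^l \le 2^k$, a constant. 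For each $\boldsymbol{a}$ I would form the ordinary linear program $\Phi_{\boldsymbol{a}}$, of size linear in $\Phi_{\mathcal{I}}$ and hence in $|\mathcal{I}|$, and decide its solvability in polynomial time \cite{Karmarkar84}, accepting iff some $\Phi_{\boldsymbol{a}}$ is solvable. Correctness is inherited from items (\textsc{ii}) and (\textsc{iii}) in the proof of Lem.~\ref{thm:relulinprognp}: a solution of any $\Phi_{\boldsymbol{a}}$ is a solution of $\Phi_{\mathcal{I}}$ and thus, by projection, of $\mathcal{I}$; conversely a solution of $\mathcal{I}$ extends to one of $\Phi_{\mathcal{I}}$, which satisfies $\Phi_{\boldsymbol{a}}$ for the $\boldsymbol{a}$ recording which ReLU nodes are active. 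The running time is $2^k$ times a polynomial in $|\mathcal{I}|$, hence polynomial.

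There is no genuine obstacle here; the one point I would state carefully is the reading of ``bounded number of ReLU nodes'': the argument needs $l$ to be bounded by a fixed constant independent of $\mathcal{I}$ (or, a little more generally, $l = O(\log|\mathcal{I}|)$, for which $2^{l}$ is still polynomial). I would add a short remark noting this mild strengthening, and contrast it with the fact that the full problem, where $l$ may be polynomial in $|\mathcal{I}|$, is exactly the source of NP-hardness shown later.
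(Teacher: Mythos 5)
Your proposal is correct and is exactly the argument the paper intends (the corollary is stated as an immediate consequence of Thm.~\ref{thm:reachinnp}): derandomize the NP procedure by enumerating all $2^l\le 2^k$ activation patterns of the boundedly many ReLU nodes and solving each resulting linear program $\Phi_{\boldsymbol{a}}$ in polynomial time. Your added remark that this extends to $l=O(\log|\mathcal{I}|)$ is a mild, accurate strengthening not claimed in the paper.
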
  

\subsection{NP-Hardness}
\begin{figure}[t]
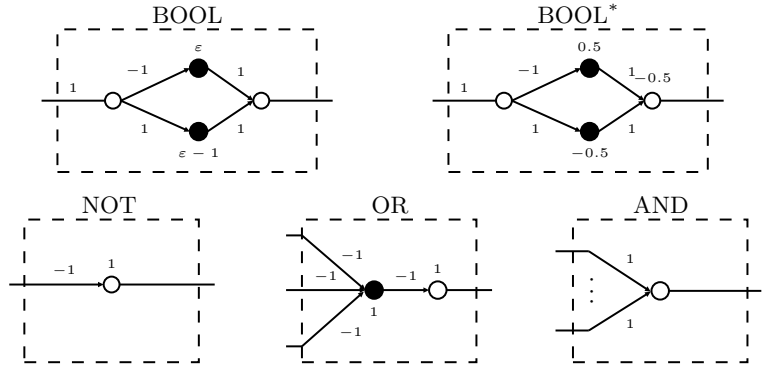

	\centering
	\include{graphics/3sat_gadgets}
	\caption{Gadgets used in the reduction from \threesat{} to \reach{}. A non-weighted outgoing edge of a gadget is connected to a weighted incoming edge of another gadget in the actual construction or is considered an output of the overall neural network.}
	\label{sec:np_compl;fig:3sat_gadgets}
\end{figure}

Katz et al.\ try to build a polynomial-time reduction from \threesat{} to \reach. The underlying idea is to encode the structure 
of a \threesat{} formula in a neural network and the existence of a satisfying assignment for this formula in the corresponding 
input- and output-specifications. Consider the \threesat{} instance
\begin{displaymath}
\psi= (X_0\lor X_1 \lor X_1) \land (\neg X_0 \lor X_1 \lor \neg X_2) \land (\neg X_1 \lor X_2 \lor X_3)
\end{displaymath}
with four propositional variables and three clauses, and let $(N,\varphi_{\text{in}},\varphi_{\text{out}})$ be the \reach{} 
instance resulting from the mapping of $\psi$ according to the reduction. To understand the structure of $N$ we make use of 
so-called \textit{gadgets}, specified in Fig.~\ref{sec:np_compl;fig:3sat_gadgets}. Each gadget is a compact NN and is 
used to describe a functional subcomponent of $N$. Using these gadgets, the network $N$ is depicted in
Fig.~\ref{sec:np_compl;fig:3sat_network_schema}. 

Ignoring the \boolg-gadgets for the moment, assume that input values are taken from $\{0,1\}$ instead of $\Real$. The function 
computed by $N$ is described as follows. Each of the three \org-gadgets together with their connected \notg-gadgets represent one 
of the clauses in $\psi$. From Fig.~\ref{sec:np_compl;fig:3sat_gadgets} we can infer that the \notg-gadgets negate their inputs 
and that the \org-gadgets output $1$ if at least one input is $1$ and $0$ otherwise. Hence, if an \org-gadget outputs $1$ then 
the current input, viewed as an assignment to the propositional variables in $\psi$, satisfies the corresponding clause. The 
\andg-gadget simply sums up all of its inputs and, thus, we get that $y$ is equal to $3$ iff each \org-gadget outputs one. 
Therefore, with the output specification $\varphi_{\text{out}} := y=3$, we get a reduction from \threesat to \reach, provided
that input values are externally restricted to $\{0,1\}$.

But NN are defined for all real-valued inputs, so we need further adjustments to make the reduction complete. First, note that it
is impossible to write an input specification $\varphi_{\text{in}}(\boldsymbol{x})$ which is satisfied by $\boldsymbol{x}$ iff 
$\boldsymbol{x} \in \{0,1\}^n$ because $\{0,1\}^n$ is not a hyperrectangle in $\Real^n$ but conjunctions of inequalities only specify 
hyperrectangles. This is where we make use of \boolg-gadgets. Let $\epsilon$ be a very small constant. A \boolg-gadget with input 
$x$ and output $z$ computes $z = \mathrm{max}(0, \epsilon-x) + \mathrm{max}(0, x-1+\epsilon)$. Now, Katz et al.\ claim the 
following: if $x \in [0;1]$ then we have $z \in [0;\epsilon]$ iff $x \in [0;\epsilon]$ or $x \in [1-\epsilon; 1]$. Thus, by 
connecting a \boolg-gadget to each input $x_i$ in $N$ and using the simple specifications 
\begin{align*}
\varphi_{\text{in}} &:= \bigwedge_{i=0}^3 x_i \geq 0 \land x_i \leq 1\enspace &
\varphi_{\text{out}} &:= \bigwedge_{i=0}^3 z_i \geq 0 \land z_i \leq \epsilon \land y \geq 3(1-\epsilon) \land y \leq 3
\end{align*}
we would get a correct translation of $\psi$. Note that the constraint on $y$ is no longer $y=3$ as 
the valid inputs to $N$, determined by the \boolg-gadgets and their output constraints, are not exactly $0$ or exactly $1$.
\begin{figure}[t]
	\centering
	\tikzset{every picture/.style={line width=0.75pt}} %set default line width to 0.75pt        

\begin{tikzpicture}[x=0.75pt,y=0.75pt,yscale=-.8,xscale=.8]
%uncomment if require: \path (0,280); %set diagram left start at 0, and has height of 280

%Shape: Circle [id:dp9803012924902101] 
\draw   (145,60.5) .. controls (145,57.74) and (147.24,55.5) .. (150,55.5) .. controls (152.76,55.5) and (155,57.74) .. (155,60.5) .. controls (155,63.26) and (152.76,65.5) .. (150,65.5) .. controls (147.24,65.5) and (145,63.26) .. (145,60.5) -- cycle ;
%Straight Lines [id:da9991412122053858] 
\draw    (385,134.5) -- (375,134.5) ;
%Straight Lines [id:da13370590129262871] 
\draw    (385,118.5) -- (375,118.5) ;
%Straight Lines [id:da13936230236863167] 
\draw    (145,60.5) -- (134.8,60.5) ;
%Straight Lines [id:da675905341835275] 
\draw    (215,30.5) -- (205,30.5) ;
%Shape: Circle [id:dp313030867660951] 
\draw   (145,120.5) .. controls (145,117.74) and (147.24,115.5) .. (150,115.5) .. controls (152.76,115.5) and (155,117.74) .. (155,120.5) .. controls (155,123.26) and (152.76,125.5) .. (150,125.5) .. controls (147.24,125.5) and (145,123.26) .. (145,120.5) -- cycle ;
%Straight Lines [id:da20891450910388287] 
\draw    (145,120.5) -- (134.8,120.5) ;
%Shape: Rectangle [id:dp6094813626099767] 
\draw  [line width=0.75]  (355,129.5) -- (375,129.5) -- (375,139.5) -- (355,139.5) -- cycle ;
%Shape: Rectangle [id:dp7881939579773016] 
\draw   (385,116.5) -- (415,116.5) -- (415,136.5) -- (385,136.5) -- cycle ;
%Straight Lines [id:da01904846337915922] 
\draw    (385,126.5) -- (345,126.5) ;
%Shape: Rectangle [id:dp007517520091486762] 
\draw  [line width=0.75]  (355,113.5) -- (375,113.5) -- (375,123.5) -- (355,123.5) -- cycle ;
%Straight Lines [id:da08977384146272716] 
\draw    (355,119.5) -- (345,119.5) ;
%Shape: Circle [id:dp5298873536475324] 
\draw   (145.2,180.5) .. controls (145.2,177.74) and (147.44,175.5) .. (150.2,175.5) .. controls (152.96,175.5) and (155.2,177.74) .. (155.2,180.5) .. controls (155.2,183.26) and (152.96,185.5) .. (150.2,185.5) .. controls (147.44,185.5) and (145.2,183.26) .. (145.2,180.5) -- cycle ;
%Straight Lines [id:da7024035459212747] 
\draw    (145.2,180.5) -- (135,180.5) ;
%Shape: Circle [id:dp8458888612523652] 
\draw   (145.2,240.5) .. controls (145.2,237.74) and (147.44,235.5) .. (150.2,235.5) .. controls (152.96,235.5) and (155.2,237.74) .. (155.2,240.5) .. controls (155.2,243.26) and (152.96,245.5) .. (150.2,245.5) .. controls (147.44,245.5) and (145.2,243.26) .. (145.2,240.5) -- cycle ;
%Straight Lines [id:da636861653934401] 
\draw    (145.2,240.5) -- (135,240.5) ;
%Shape: Rectangle [id:dp6650407995122365] 
\draw  [line width=0.75]  (175,25.5) -- (205,25.5) -- (205,35.5) -- (175,35.5) -- cycle ;
%Straight Lines [id:da8763579590092478] 
\draw    (165,30.5) -- (155,60.5) ;
%Shape: Rectangle [id:dp08117253196283092] 
\draw  [line width=0.75]  (175,85.5) -- (205,85.5) -- (205,95.5) -- (175,95.5) -- cycle ;
%Straight Lines [id:da3283007849619515] 
\draw    (215,90.5) -- (205,90.5) ;
%Shape: Rectangle [id:dp6184614589489974] 
\draw  [line width=0.75]  (175,145.5) -- (205,145.5) -- (205,155.5) -- (175,155.5) -- cycle ;
%Straight Lines [id:da2305923362936324] 
\draw    (215,150.5) -- (205,150.5) ;
%Shape: Rectangle [id:dp47021857150597723] 
\draw  [line width=0.75]  (175,205.5) -- (205,205.5) -- (205,215.5) -- (175,215.5) -- cycle ;
%Straight Lines [id:da1459238003418366] 
\draw    (215,210.5) -- (205,210.5) ;
%Straight Lines [id:da4017553509206502] 
\draw    (345,179.5) -- (155,120.5) ;
%Straight Lines [id:da06721016392646284] 
\draw    (385,57.5) -- (155,60.5) ;
%Shape: Rectangle [id:dp9165353315473506] 
\draw   (385,55.5) -- (415,55.5) -- (415,75.5) -- (385,75.5) -- cycle ;
%Straight Lines [id:da3908944872809208] 
\draw    (375,65.5) -- (155,120.5) ;
%Straight Lines [id:da4965296349143198] 
\draw    (155,60.5) -- (345,119.5) ;
%Straight Lines [id:da205162255127586] 
\draw    (155,120.5) -- (345,126.5) ;
%Straight Lines [id:da2021992653473783] 
\draw    (155.2,180.5) -- (345,134.5) ;
%Straight Lines [id:da4525239445543494] 
\draw    (355,134.5) -- (345,134.5) ;
%Straight Lines [id:da7393436111982803] 
\draw    (385,194.5) -- (345,194.5) ;
%Straight Lines [id:da4779972934358039] 
\draw    (385,178.5) -- (375,178.5) ;
%Shape: Rectangle [id:dp8376900094880615] 
\draw   (385,176.5) -- (415,176.5) -- (415,196.5) -- (385,196.5) -- cycle ;
%Straight Lines [id:da1402798019130811] 
\draw    (385,186.5) -- (345,186.5) ;
%Shape: Rectangle [id:dp6281700531617449] 
\draw  [line width=0.75]  (355,173.5) -- (375,173.5) -- (375,183.5) -- (355,183.5) -- cycle ;
%Straight Lines [id:da8508526510406338] 
\draw    (355,179.5) -- (345,179.5) ;
%Straight Lines [id:da30879286215096724] 
\draw    (155,120.5) -- (375,74.5) ;
%Straight Lines [id:da5985942878894872] 
\draw    (155.2,180.5) -- (345,186.5) ;
%Straight Lines [id:da3505757046232276] 
\draw    (155.2,240.5) -- (345,194.5) ;
%Straight Lines [id:da40912102974972386] 
\draw    (385,74.5) -- (375,74.5) ;
%Straight Lines [id:da5097947532012135] 
\draw    (385,65.5) -- (375,65.5) ;
%Straight Lines [id:da14768778599729737] 
\draw    (475,117.5) -- (465,117.5) ;
%Shape: Rectangle [id:dp5648560821164497] 
\draw   (475,115.5) -- (515,115.5) -- (515,135.5) -- (475,135.5) -- cycle ;
%Straight Lines [id:da7163372050073282] 
\draw    (475,134.5) -- (465,134.5) ;
%Straight Lines [id:da020805732652873887] 
\draw    (475,125.5) -- (465,125.5) ;
%Straight Lines [id:da7274430632407045] 
\draw    (425,65.5) -- (415,65.5) ;
%Straight Lines [id:da3684090485802831] 
\draw    (425,125.5) -- (415,125.5) ;
%Straight Lines [id:da5400779114652301] 
\draw    (425,185.5) -- (415,185.5) ;
%Straight Lines [id:da24763414959627839] 
\draw    (425,185.5) -- (465,134.5) ;
%Straight Lines [id:da5716625661701955] 
\draw    (425,125.5) -- (465,125.5) ;
%Straight Lines [id:da4598217824241351] 
\draw    (425,65.5) -- (465,117.5) ;
%Straight Lines [id:da7892542142190537] 
\draw    (175,150.5) -- (165,150.5) ;
%Straight Lines [id:da7053143719697899] 
\draw    (175,210.5) -- (165,210.5) ;
%Straight Lines [id:da7154257506526667] 
\draw    (175,90.5) -- (165,90.5) ;
%Straight Lines [id:da264216213639318] 
\draw    (175,30.5) -- (165,30.5) ;
%Straight Lines [id:da34827969124781477] 
\draw    (165,90.5) -- (155,120.5) ;
%Straight Lines [id:da6260838608521688] 
\draw    (165,150.5) -- (155.2,180.5) ;
%Straight Lines [id:da5844345264315283] 
\draw    (165,210.5) -- (155.2,240.5) ;
%Straight Lines [id:da3912825395522599] 
\draw    (525,125.5) -- (514.8,125.5) ;

% Text Node
\draw (400,126.5) node  [font=\small] [align=left] {OR};
% Text Node
\draw (365,134.5) node  [font=\tiny] [align=left] {NOT};
% Text Node
\draw (132.8,60.5) node [anchor=east] [inner sep=0.75pt]  [font=\small]  {$x_{0}$};
% Text Node
\draw (217,30.5) node [anchor=west] [inner sep=0.75pt]  [font=\small]  {$z_{0}$};
% Text Node
\draw (132.8,120.5) node [anchor=east] [inner sep=0.75pt]  [font=\small]  {$x_{1}$};
% Text Node
\draw (365,118.5) node  [font=\tiny] [align=left] {NOT};
% Text Node
\draw (133,180.5) node [anchor=east] [inner sep=0.75pt]  [font=\small]  {$x_{2}$};
% Text Node
\draw (133,240.5) node [anchor=east] [inner sep=0.75pt]  [font=\small]  {$x_{3}$};
% Text Node
\draw (190,30.5) node  [font=\tiny] [align=left] {BOOL};
% Text Node
\draw (190,90.5) node  [font=\tiny] [align=left] {BOOL};
% Text Node
\draw (190,150.5) node  [font=\tiny] [align=left] {BOOL};
% Text Node
\draw (190,210.5) node  [font=\tiny] [align=left] {BOOL};
% Text Node
\draw (400,65.5) node  [font=\small] [align=left] {OR};
% Text Node
\draw (400,186.5) node  [font=\small] [align=left] {OR};
% Text Node
\draw (365,178.5) node  [font=\tiny] [align=left] {NOT};
% Text Node
\draw (495,125.5) node  [font=\small] [align=left] {AND};
% Text Node
\draw (217,90.5) node [anchor=west] [inner sep=0.75pt]  [font=\small]  {$z_{1}$};
% Text Node
\draw (217,150.5) node [anchor=west] [inner sep=0.75pt]  [font=\small]  {$z_{2}$};
% Text Node
\draw (217,210.5) node [anchor=west] [inner sep=0.75pt]  [font=\small]  {$z_{3}$};
% Text Node
\draw (527,125.5) node [anchor=west] [inner sep=0.75pt]  [font=\small]  {$y$};

\end{tikzpicture}
	\caption{Schema of a neural network resulting from the reduction of the \threesat-formula $(X_0\lor X_1 \lor X_1) \land (\neg X_0 \lor X_1 \lor \neg X_2) \land (\neg X_1 \lor X_2 \lor X_3)$. Note that no weights are depicted as these are specified inside the gadgets.}
	\label{sec:np_compl;fig:3sat_network_schema}
\end{figure}
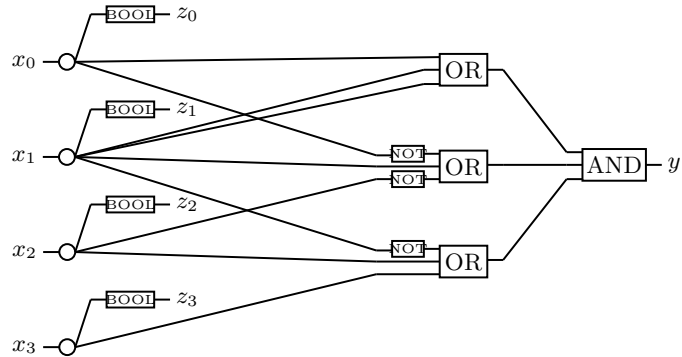
However, the claim about \boolg-gadgets is wrong. Consider a \boolg-gadget with very small $\epsilon$ such that it is safe to assume 
$\epsilon < 2\epsilon < 1-\epsilon$. Then, for $x=2\epsilon$ we have $z = 0$, which contradicts the claim. In fact, it can be shown that for each $\epsilon \leq \frac{1}{2}$ and each input $x \in [0;1]$ the output $z$ is an element of $[0;\epsilon]$. Clearly, this is not the intended property of these gadgets. But with some adjustments to the \boolg-gadgets we can make the reduction work.

A \textit{\boolrepairedg-gadget} is a neural network with functional form $\bbbr \to \bbbr$
shown in Fig.~\ref{sec:np_compl;fig:3sat_gadgets}. It computes the function
\begin{align*}
	z = \mathrm{max}\left(0, \frac{1}{2} - x \right) + \mathrm{max}\left(0, x - \frac{1}{2}\right) - \frac{1}{2},
\end{align*}
where $x$ is the input variable and $z$ is the output variable.
For this \boolrepairedg-gadget we can show a similar statement as it was intended for the \boolg-gadgets in the original proof.
\begin{lemma}
	In a \boolrepairedg-gadget with input $x$ and output $z$ we have $z = 0$ if and only if $x = 0$ or $x = 1$.
	\label{sec:app;lem:bool_gadget_prop}
\end{lemma}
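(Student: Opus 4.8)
The plan is to determine the function $\mathbb{R} \to \mathbb{R}$ computed by the $\boolrepairedg$-gadget in closed form and then simply read off the claimed equivalence. Reading the gadget off Fig.~\ref{sec:np_compl;fig:3sat_gadgets}, the two $\ReLU$ nodes compute $\max(0, \tfrac12 - x)$ and $\max(0, x - \tfrac12)$, and the output node sums these with an extra bias $-\tfrac12$, so indeed $z = \max(0, \tfrac12 - x) + \max(0, x - \tfrac12) - \tfrac12$, matching the displayed formula.

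First I would do a case distinction on whether $x \le \tfrac12$ or $x \ge \tfrac12$. If $x \le \tfrac12$, then $\max(0, x - \tfrac12) = 0$ and $\max(0, \tfrac12 - x) = \tfrac12 - x$, so $z = -x$. If $x \ge \tfrac12$, then symmetrically $\max(0, \tfrac12 - x) = 0$ and $\max(0, x - \tfrac12) = x - \tfrac12$, so $z = x - 1$. The two cases overlap at $x = \tfrac12$, where both expressions give $z = -\tfrac12$, so the split is exhaustive and consistent; equivalently, one may observe that $z = |x - \tfrac12| - \tfrac12$.

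With this closed form the lemma is immediate. For the ``if'' direction: $x = 0$ falls into the first case and yields $z = -0 = 0$, while $x = 1$ falls into the second and yields $z = 1 - 1 = 0$. For the ``only if'' direction, assume $z = 0$; then either $x \le \tfrac12$ and $-x = 0$, forcing $x = 0$, or $x \ge \tfrac12$ and $x - 1 = 0$, forcing $x = 1$. In both cases $x \in \{0,1\}$.

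I do not expect a genuine obstacle here: the argument is a short piecewise-linear computation. The only points needing care are reading the gadget off the figure correctly and handling the boundary $x = \tfrac12$ so that the case analysis is exhaustive. It is worth noting in passing that the closed form also gives $z < 0$ for $x \in (0,1)$ and $z > 0$ for $x \notin [0,1]$; this strict behaviour — which the original $\boolg$-gadget lacked — is precisely what makes the output constraint $z = 0$, together with the input box $0 \le x \le 1$, pin $x$ down to a genuine Boolean value in the reduction.
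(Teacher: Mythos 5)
Your proof is correct and follows essentially the same route as the paper's: derive the piecewise-linear closed form ($z=-x$ for $x\le\tfrac12$, $z=x-1$ otherwise) and read off that $z=0$ exactly at $x\in\{0,1\}$. The extra care at the boundary $x=\tfrac12$ and the observation $z=|x-\tfrac12|-\tfrac12$ are fine but add nothing beyond the paper's argument.
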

\begin{proof}
	Note that $z = \mathrm{max}\left(0, \frac{1}{2} - x \right) + \mathrm{max}\left(0, x - \frac{1}{2}\right) - \frac{1}{2}$ is 
	equivalent to
	\begin{align*}
		z = \begin{cases}
			-x & \text{if } x < \frac{1}{2}, \\
			x - 1 & \text{otherwise.}
		\end{cases}
	\end{align*}
	From this we immediately get that $z = 0$ if $x = 0$ or $x = 1$, and $z \neq 0$ for all other values of $x$. \qed
\end{proof}

Now, replacing all \boolg-gadgets with \boolrepairedg-gadgets in the construction and using the simple specifications $\varphi_{\text{in}} = \top$ and $\varphi_{\text{out}} = \bigwedge_{i=0}^{n-1}z_i=0 \land y=m$ for a \threesat-instance with $n$ propositional variables and $m$ clauses, we get a correct reduction from \threesat{} to \reach{}. %Therefore, the result of Katz et al. holds.
\begin{theorem}%{\cite{KatzBDJK17}}
	\reach{} is NP-hard.
	\label{sec:np_compl;th:np_hardness}
\end{theorem}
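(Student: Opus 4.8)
The plan is to turn the informal construction above into an actual polynomial-time reduction from \threesat and to verify both directions. Given a \threesat instance $\psi$ over variables $X_0,\dots,X_{n-1}$ with clauses $C_0,\dots,C_{m-1}$, I would build the network $N$ by: taking $n$ input nodes holding $x_0,\dots,x_{n-1}$; feeding each $x_i$ into a \boolrepairedg-gadget with output $z_i$; for every clause $C_j$, routing each of its three literals into a single \org-gadget, via a \notg-gadget when the literal is negated and directly otherwise, with output $o_j$; and finally summing $o_0,\dots,o_{m-1}$ in one \andg-gadget with output $y$. Since every gadget has constant size, $N$ has size $O(n+m)$, uses only ReLU and identity activations, and is produced in polynomial time. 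The specifications are $\varphi_{\text{in}} = \top$ and $\varphi_{\text{out}} = \bigwedge_{i=0}^{n-1} z_i = 0 \ \land\ y = m$.

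I would then record the gadget semantics read off Fig.~\ref{sec:np_compl;fig:3sat_gadgets}: a \notg-gadget computes $x \mapsto 1-x$; an \org-gadget computes $(a,b,c) \mapsto 1 - \max(0,\,1-a-b-c)$, so on inputs from $\{0,1\}$ it returns $0$ exactly when $a=b=c=0$ and $1$ otherwise --- and in particular its output always lies in $\{0,1\}$ on Boolean inputs; the \andg-gadget returns the sum of its inputs; and Lemma~\ref{sec:app;lem:bool_gadget_prop} states $z_i = 0$ iff $x_i \in \{0,1\}$.

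Correctness then splits as usual. Soundness: if $\beta$ satisfies $\psi$, set $x_i := 1$ when $\beta(X_i)$ is true and $x_i := 0$ otherwise; then each $z_i = 0$ by the lemma, each literal of $C_j$ reaches its \org-gadget carrying its truth value under $\beta$, and since $\beta$ satisfies every clause each $o_j = 1$, hence $y = m$; so $\varphi_{\text{in}}(\boldsymbol{x}) \land \varphi_{\text{out}}(N(\boldsymbol{x}))$ holds. Completeness: if $\boldsymbol{x} \in \Real^n$ makes $\varphi_{\text{out}}(N(\boldsymbol{x}))$ true, then $z_i = 0$ for all $i$ forces $x_i \in \{0,1\}$ by the lemma, so $\boldsymbol{x}$ encodes an assignment $\beta$; each $o_j \in \{0,1\}$, and $y = \sum_j o_j = m$ with $m$ summands forces every $o_j = 1$, which by the \org-semantics means some literal of $C_j$ is true under $\beta$; thus $\beta$ satisfies $\psi$. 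With Thm.~\ref{thm:reachinnp} this also re-establishes NP-completeness of \reach.

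The work here is entirely in the bookkeeping rather than in any new idea. The two points that need care are: (i) checking that the gadgets, with their fixed weights and biases substituted in, really compute the stated functions --- crucially that an \org-gadget returns \emph{exactly} $0$ or $1$ on Boolean inputs, which is precisely what makes $y = m$ force all clauses to be satisfied --- and (ii) confirming that composing the gadgets, including the fan-out of each $x_i$ into both its \boolrepairedg-gadget and every clause gadget in which $X_i$ occurs, still yields a well-formed layered NN as fixed in Sec.~\ref{sec:prelim}, inserting identity pass-through nodes where gadgets of differing depth must be aligned into layers. Point (i) is the heart of the matter and the reason the corrected \boolrepairedg-gadget, via Lemma~\ref{sec:app;lem:bool_gadget_prop}, is needed in place of the original \boolg-gadget.
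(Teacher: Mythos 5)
Your reduction is exactly the one the paper uses: \boolrepairedg-gadgets forcing $x_i \in \{0,1\}$ via $z_i = 0$ (Lemma~\ref{sec:app;lem:bool_gadget_prop}), \notg/\org-gadgets per clause, an \andg-gadget summing to $y$, and the specifications $\varphi_{\text{in}} = \top$, $\varphi_{\text{out}} = \bigwedge_i z_i = 0 \land y = m$; your two correctness directions and the remark about dummy nodes for layering match the paper's (more tersely stated) argument. The proposal is correct and takes essentially the same approach.
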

One could argue that the networks resulting from the reduction of \threesat{} are not typical feed-forward neural networks as they do not follow a layerwise structure. A reason for this is that some inputs are connected to \notg-gadgets where some are not and that the outputs $z_i$ are not in the same layer as the output $y$. This can of course be fixed by introducing additional dummy nodes. 
% !TEX root =  ../main.tex

\section{NP-Hardness Holds in Very Restricted Cases Already}
\label{sec:low_compl_fragments}

Let \neuralthreesat{} be the class of \reach{} instances which are obtained as images under the reduction presented in the previous section. Note that the NN of \neuralthreesat{} are already quite restricted; they possess only a fixed number of layers. In this
section we strengthen the NP-hardness result by constructing even simpler classes of NN for which \reach is NP-hard already. 
Sect.~\ref{sec:singlelayer} studies the possibility to make these NN structurally as simple as possible; Sect.~\ref{sec:restricted_weights}
shows that requirements on weights and biases can be relaxed whilst retaining NP-hardness. 

\subsection{Neural Networks of a Simple Structure}
\label{sec:singlelayer}

We consider NN with just one hidden layer of ReLU nodes and an output dimension of one. As before, we can establish a reduction from \threesat{}.
\begin{theorem}
	\reach{} is NP-hard for NN with output dimension one, a single hidden layer and simple specifications.
	\label{sec:low_compl_fagments;th:one_layer_output_one}
\end{theorem}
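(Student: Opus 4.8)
The plan is to give a direct polynomial-time reduction from \threesat{} that collapses the whole multi-layer gadget construction of Sec.~\ref{sec:np_compl} into a \emph{single} hidden layer feeding one scalar output. Let $\psi = \bigwedge_{j=1}^m C_j$ over propositional variables $X_0,\dots,X_{n-1}$ with $C_j = \ell_{j,1}\lor\ell_{j,2}\lor\ell_{j,3}$. I use input variables $x_0,\dots,x_{n-1}$ (intended to range over $\{0,1\}$), writing $\langle\ell\rangle$ for the affine term $x_i$ if $\ell = X_i$ and $1-x_i$ if $\ell = \neg X_i$, and setting $s_j := \langle\ell_{j,1}\rangle + \langle\ell_{j,2}\rangle + \langle\ell_{j,3}\rangle$. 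The hidden layer has $m+2n$ \ReLU{} nodes: one node $\ReLU(1 - s_j)$ per clause $C_j$, and two nodes $\ReLU(\tfrac12 - x_i)$, $\ReLU(x_i - \tfrac12)$ per variable $X_i$. The unique output node (identity) computes
\[ y \ := \ \tfrac{n}{2} \ + \ \sum_{j=1}^m \ReLU(1 - s_j) \ - \ \sum_{i=0}^{n-1}\bigl(\ReLU(\tfrac12 - x_i) + \ReLU(x_i - \tfrac12)\bigr). \]
The specifications are $\varphi_{\text{in}} := \bigwedge_{i=0}^{n-1}(x_i \ge 0 \land x_i \le 1)$ and $\varphi_{\text{out}} := y = 0$, both of which are simple (each conjunct has the form $c\cdot x \le b$).

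Correctness rests on a ``no-cancellation'' observation, which is the only place real care is needed. For $x\in[0,1]^n$ we have $\ReLU(\tfrac12 - x_i) + \ReLU(x_i - \tfrac12) = |x_i - \tfrac12| \le \tfrac12$, so \emph{every} term contributing to $y$ is non-negative: the clause terms $\ReLU(1-s_j)\ge 0$, and the per-variable terms $\tfrac12 - |x_i - \tfrac12| = \min(x_i,1-x_i)\ge 0$. Hence $y\ge 0$ on the input box, and $y = 0$ holds iff every clause term and every variable term vanishes. By (the argument of) Lem.~\ref{sec:app;lem:bool_gadget_prop}, $\min(x_i,1-x_i)=0$ iff $x_i\in\{0,1\}$; and $\ReLU(1-s_j)=0$ iff $s_j\ge 1$. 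So a vector satisfying $\varphi_{\text{in}}$ together with $y=0$ is exactly an $x\in\{0,1\}^n$ with $s_j\ge 1$ for all $j$.

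The remaining steps are routine: interpreting such an $x$ as an assignment $\alpha$, each $\langle\ell_{j,k}\rangle\in\{0,1\}$ is the truth value of $\ell_{j,k}$ under $\alpha$, so $s_j$ is an integer in $\{0,1,2,3\}$ and $s_j\ge 1$ iff $C_j$ has a true literal; thus $\varphi_{\text{out}}(N(x))$ is reachable iff $\psi$ is satisfiable, the converse direction being immediate by plugging a satisfying assignment in as the input. The network has $m+2n$ hidden nodes (each clause node with at most three non-zero input weights), all weights and biases are rationals, it forms a properly layered NN with one hidden \ReLU-layer and output dimension one, and the reduction is plainly computable in polynomial time. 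I expect the main obstacle to be purely presentational rather than mathematical: making precise that the $n$ Booleanity checks and the $m$ clause checks can be aggregated into a single scalar identity, i.e.\ arguing that the output bias $\tfrac{n}{2}$ together with the negatively weighted \ReLU-outputs still yields a manifestly non-negative expression on $[0,1]^n$ that cannot be driven to $0$ by a false clause term (this is also why the simple input box $\varphi_{\text{in}}$ is retained, where the earlier proof could use $\top$: outside $[0,1]$ the penalty $\min(x_i,1-x_i)$ turns negative and could cancel a positive clause term).
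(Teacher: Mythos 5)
Your construction is essentially the paper's own: one hidden layer with the two Booleanity nodes $\ReLU(\tfrac12-x_i)$, $\ReLU(x_i-\tfrac12)$ per variable and one node $\ReLU(1-s_j)$ per clause, all aggregated into a single identity output whose equality constraint forces every term to its extremal value; you have merely flipped signs and added an output bias of $\tfrac{n}{2}$ so that the target becomes $y=0$ instead of the paper's $y=\tfrac{n}{2}$. The correctness argument (each summand is non-negative on the box $[0,1]^n$, hence must vanish individually) is the same as the paper's, and it is sound.
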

\begin{proof}
Let $\psi$ be a \threesat{} formula with $n$ propositional variables $X_i$ and $m$ clauses $l_j$. We slightly modify the construction
of a network $N$ in the proof of Thm.~\ref{sec:np_compl;th:np_hardness}. First, we remove the last identity node of all 
\boolrepairedg-gadgets in $N$ and directly connect the two outputs of their ReLU nodes to the \andg-gadget, weighted with $1$. 
Additionally, we merge \notg-gadgets and \org-gadgets in $N$. Consider the \org-gadget corresponding to some clause $l_j$. The 
merged gadget has three inputs $x_{j_0}, x_{j_1}, x_{j_2} $ and computes $\max\bigl(0, 1 - \sum_{k=0}^2 f_j(x_{j_k})\bigr)$ where
$f_j(x_{j_k}) = x_{j_k}$ if $X_{j_k}$ occurs positively in $l_j$ and $f_j(x_{j_k}) = 1 - x_{ij}$ if it occurs negatively. It is
straightforward to see that the output of such a gadget is $0$ if at least one positively (resp.\ negatively) weighted input is $0$,
resp.\ $1$, and that the output is $1$ if all positively weighted inputs are $1$ and all negatively weighted inputs are $0$. These 
merged gadgets are connected with weight $-1$ to the \andg-gadget. Once done for all \boolrepairedg-, \notg- and 
\org-gadgets, the overall output $y$ of $N$ is given by
	\begin{align*}
		\sum_{i=0}^{n-1} \max\bigl(0,\frac{1}{2} - x_i\bigr) + \max\bigl(0 , x_i -\frac{1}{2}\bigr) - \sum_{i=0}^{m-1} \max\bigl(0, 1 - \sum_{j=0}^2 f_i(x_{ij})\bigr).
	\end{align*}
Note that $N$ has input dimension $n$, a single hidden layer of $2n+m$ ReLU nodes and output dimension $1$. 

Now take the simple specifications $\varphi_{\text{in}} = \bigwedge_{i=0}^{n-1} x_i \geq 0 \land x_i \leq 1$ and 
$\varphi_{\text{out}} = y = \frac{n}{2}$.  We argue that the following holds for a solution to
$(N, \varphi_{\text{in}}, \varphi_{\text{out}})$: (\textsc{i}) all $x_i$ are either $0$ or $1$, and (\textsc{ii}) the output of 
each merged \org-gadget is $0$. To show (\textsc{i}), we assume the opposite, i.e.\ there is a solution with $x_{k} \in (0;1)$ for 
some $k$. This implies that $\sum_{i=0}^{n-1} \max\bigl(0,\frac{1}{2} - x_i\bigr) + \max\bigl(0 , x_i -\frac{1}{2}\bigr) < \frac{n}{2}$ 
as for all $x_i \in [0;1]$ we have $\max\bigl(0,\frac{1}{2} - x_i\bigr) + \max\bigl(0 , x_i -\frac{1}{2}\bigr) \leq \frac{1}{2}$, and 
for $x_k$ we have $\max\bigl(0,\frac{1}{2} - x_k\bigr) + \max\bigl(0 , x_k -\frac{1}{2}\bigr) < \frac{1}{2}$. Furthermore, 
we must have $-\sum_{i=0}^{m-1} \max\bigl(0, 1 - \sum_{j=0}^2 f(x_{ij})\bigr) \leq 0$. Therefore, this cannot be a solution for 
$(N, \varphi_{\text{in}}, \varphi_{\text{out}})$ as it does not satisfy $y = \frac{n}{2}$. 

To show (\textsc{ii}), assume there is a solution such that one merged \org-gadget outputs a value different from $0$. Then, 
$-\sum_{i=0}^{m-1} \max\bigl(0, 1 - \sum_{j=0}^2 f(x_{ij})\bigr) < 0$ which in combination with (\textsc{i}) yields 
$y < \frac{n}{2}$. Again, this is a contradiction. 

Putting (\textsc{i}) and (\textsc{ii}) together, a solution for $(N, \varphi_{\text{in}}, \varphi_{\text{out}})$ implies the existence
of a model for $\psi$. For the opposite direction assume that $\psi$ has a model $I$. Then, a solution for 
$(N, \varphi_{\text{in}}, \varphi_{\text{out}})$ is given by $x_i = 1$ if $I(X_i)$ is true and $x_i = 0$ otherwise. \qed
\end{proof}

In the previous section, especially in the arguments of Cor.~\ref{sec:np_compl;cor:relu_nodes}, we pointed out that the occurrence of ReLU nodes is crucial for the NP-hardness of \reach{}. Thus, it is tempting to assume that any major restriction to these nodes leads to efficiently solvable classes.
\begin{restatable}{theorem}{tmonereluinput}
	\reach{} is NP-hard for NN where all ReLU nodes have at most one non-zero weighted input and simple specifications.
	\label{sec:low_compl_fragments;th:relu_in_one}
\end{restatable}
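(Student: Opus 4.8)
The plan is to reuse the network built in the proof of Thm.~\ref{sec:low_compl_fagments;th:one_layer_output_one} and to localise the single place where a ReLU node reads from more than one predecessor. Recall that there the hidden layer contains two kinds of ReLU nodes: for each propositional variable $X_i$ the two \boolrepairedg-nodes $\max(0,\frac{1}{2}-x_i)$ and $\max(0,x_i-\frac{1}{2})$, and for each clause $l_i$ a merged \org-node computing $\max\bigl(0,1-\sum_{j=0}^{2} f_i(x_{ij})\bigr)$. The former already have exactly one non-zero weighted input; only the latter violate the restriction demanded here.

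First I would replace each merged \org-node by a two-node cascade: an \emph{identity} node $h_i$ that computes the affine term $1-\sum_{j=0}^{2} f_i(x_{ij})$ from the input layer, followed by a ReLU node $r_i = \max(0, h_i)$ whose single incoming edge comes from $h_i$ with weight $1$. Both the definition of NN in Sec.~\ref{sec:prelim} and the hypothesis of this theorem leave identity nodes unconstrained, so $h_i$ may legally take three inputs. The output node then sums the \boolrepairedg-nodes with weight $+1$ and the nodes $r_i$ with weight $-1$, exactly as before; being the identity output node it is unconstrained as well. Since this theorem places no bound on the number of layers, the extra layer of identity nodes is harmless; to keep the graph strictly layered one inserts the obvious identity pass-through nodes for the \boolrepairedg-values, the same cosmetic fix noted after Thm.~\ref{sec:np_compl;th:np_hardness}. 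The resulting network computes precisely the same function of its inputs, namely $\sum_{i} \max(0,\frac{1}{2}-x_i) + \max(0, x_i-\frac{1}{2}) - \sum_{i} \max\bigl(0, 1 - \sum_{j} f_i(x_{ij})\bigr)$, and now every ReLU node has at most one non-zero weighted input.

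I would then keep the same simple specifications $\varphi_{\text{in}} = \bigwedge_{i=0}^{n-1} x_i \geq 0 \land x_i \leq 1$ and $\varphi_{\text{out}} = y = \frac{n}{2}$; these mention only input and output variables, so they stay simple and are unaffected by the newly introduced internal nodes. Correctness of the reduction is then verbatim the argument of Thm.~\ref{sec:low_compl_fagments;th:one_layer_output_one}: since the input-output function is unchanged, any solution forces every $x_i \in \{0,1\}$ (property (\textsc{i})) and every merged-\org value to be $0$ (property (\textsc{ii})), hence induces a model of $\psi$; conversely a model $I$ of $\psi$ yields the solution $x_i = 1$ if $I(X_i)$ and $x_i = 0$ otherwise.

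The only point needing care is the bookkeeping that turns a multi-input ReLU node into an identity-then-ReLU pair without inadvertently creating a new multi-input ReLU node or destroying the layerwise shape; because identity nodes are exempt from the restriction this is routine, and the substantive work --- properties (\textsc{i}) and (\textsc{ii}) --- is inherited unchanged from the previous proof rather than re-done.
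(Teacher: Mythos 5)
Your proof is correct, but it takes a genuinely different route from the paper's. The paper does not build on the single-hidden-layer network of Thm.~\ref{sec:low_compl_fagments;th:one_layer_output_one}; instead it returns to the gadget construction of Thm.~\ref{sec:np_compl;th:np_hardness}, replaces each \org-gadget by a plain identity node, drops the \andg-gadget entirely, and pushes the clause check into the output specification $\bigwedge_i z_i = 0 \land \bigwedge_i y_i \geq 1$ --- exploiting the fact that once the $z_i = 0$ constraints force Boolean inputs, ``at least one literal is true'' is just a linear inequality on the sum of literal values, so no ReLU is needed for the disjunction at all. You instead keep the clause ReLU but factor it through a fresh identity node $h_i$ computing the affine pre-activation, so that the ReLU reads a single predecessor; this is sound under the paper's definition (identity nodes may take arbitrarily many inputs and are exempt from the restriction), it preserves the input--output function exactly, and so properties (\textsc{i}) and (\textsc{ii}) of the earlier proof transfer verbatim. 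Your argument in fact rests on a more general observation --- \emph{any} network can be rewritten so that every ReLU node has one non-zero input, by interposing an identity node before each ReLU, at the cost of extra layers --- which makes the theorem an almost immediate corollary of Thm.~\ref{sec:np_compl;th:np_hardness} and additionally retains output dimension one; what it gives up relative to the paper's construction is the single-hidden-layer shape, and it leaves the clause test inside the network rather than in the (still simple) specification. Both reductions are valid; the only point you should make explicit is that the interposed identity nodes are permitted unbounded fan-in by the theorem's hypothesis, which you do.
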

\begin{proof}
	We prove NP-hardness via a reduction from \threesat. The reduction works in the same way as in the proof of Thm.~\ref{sec:np_compl;th:np_hardness}, but with the following adjustments. We replace the \org-gadgets with simple identity-nodes, we do not include the \andg-gadget, and we set the output specification to $\varphi_{\text{out}}=\bigwedge_{i=0}^{n-1} z_i = 0 \land \bigwedge_{i=0}^m y_i \geq 1$, where $y_i$ is the output of the $i$-th identity-node replacing the former $i$-th \org-gadget, $z_i$ is the output of the $i$-th BOOL-gadget, $n$ is the number of propositional variables and $m$ the number of clauses in the considered \threesat-instance. Note that this is a simple specification and that the only ReLU nodes in this network are inside the \boolrepairedg-gadgets, which have only one non-zero input. Now, if each $z_i = 0$ then the value of an output $y_i$ is equivalent to the number of inputs equal to $1$. The correctness of this reduction is argued int the exaxt same way as in in the original one. \qed
\end{proof}
% !TEX root =  ../main.tex

\subsection{Neural Networks with Simple Parameters}
\label{sec:restricted_weights}
\begin{figure}[t]
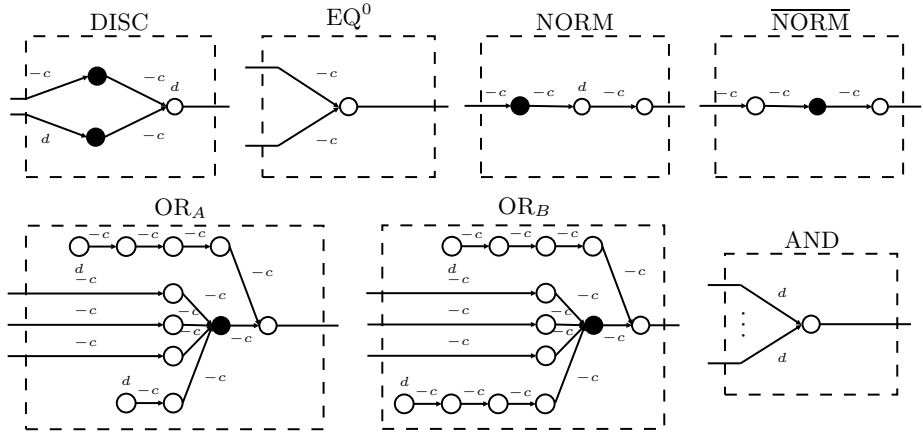

	\centering
	\include{graphics/gadgets_restricted_weights}
	\caption{Gadgets used to show that \reach{} is NP-hard if restricted to $\mathcal{C}\left(\{-c,0,d\}\right)$ . A non-weighted outgoing edge of a gadget is connected to a weighted incoming one of another gadget in the actual construction or are considered as outputs of the overall neural networks.}
	\label{sec:restricted_weights;fig:gadgets}
\end{figure}

One could argue that the NP-hardness results in Thm.~\ref{sec:np_compl;th:np_hardness} and
\ref{sec:low_compl_fagments;th:one_layer_output_one} are only partially applicable to real world problems as the
constructed NN use very specific combinations of weights and biases, namely $-1, 0, \frac{1}{2}$ and $1$, which may be unlikely to 
occur in this exact combination in real-world applications. We show that \reach{} is already NP-hard in cases where only very weak 
assumptions are made on the set of occurring weights and biases. 

For $P \subseteq \bbbq$ let $\mathcal{C}(P)$ be the class of \reach{} instances whose NN only use weights and biases from $P$ and simple specifications. We will show that NP-hardness already occurs when $P$ contains
three values: $0$, some positive and some negative value. We make use of the same techniques as in
Sec.~\ref{sec:np_compl} and assume that the general idea of gadgets and the reduction from \threesat{} to \reach{} are known. 

\begin{definition}
Let $c,d \in \bbbq^{>0}$ and $\psi$ be a \threesat-formula with $n$ propositional variables $X_i$ and $m$ clauses $l_j$. The network 
$N_{-c,d,\psi}$ is a network with $2n$ inputs, two for each $X_i$, called $x_i$ and $\overline{x_i}$. We describe the structure of 
$N_{-c,d,\psi}$ using the gadgets from Fig.~\ref{sec:restricted_weights;fig:gadgets}: 
	\begin{itemize}
	 \item Each input $x_i$ is connected to both inputs of a \discreteg-gadget and this gadget is connected with weight $-c$ to a chain of five identity nodes interconnected with weight $-c$. We call the output of the last node of this chain $z_i$.
	 \item Each pair $x_i$ and $\overline{x_i}$ is connected to an \inverseeqg-gadget and this gadget is connected with weight $-c$ to a chain of six identity nodes interconnected with weight $-c$. We call the output of the last node of this chain $e_i$.
	 \item Each input $x_i$ is connected to a \normg-gadget. Analogously, each $\overline{x_i}$ is connected to a \normnotg-gadget.
	 \item If $c \geq 1$ (resp.\ $c < 1$) then there are $m$ \orgA-gadgets (resp.\ \orgB-gadgets), one for each $l_j$ s.t.\ if $X_i$ occurs positively in $l_j$ then the output of the \normg-gadget connected to $x_i$ is connected and if $X_i$ occurs negatively the output of the \normnotg-gadget conntected to $\overline{x_i}$ is connected.
	 \item The outputs of all \orgA-gadgets respectively \orgB-gadgets are connected to a single \andg-gadget. We denote the output of this \andg-gadget with $y$. 
	\end{itemize}	
\end{definition}
Note that each $N_{-c,d,\psi}$ has eight layers and output dimension $2n+1$. Moreover, $N_{-c,d,\psi} \in \mathcal{C}{\left(\{-c,0,d\}\right)}$. Next, we need to clarify some properties of the used gadgets.
\begin{restatable}{lemma}{lemgadgets}
	\label{sec:restricted_weights;lem:gadgets}
	Let $x_0, x_1, x_2$ denote inputs for some gadget. The following statements hold:
	\begin{enumerate}
		\item \label{prop:discreteg} If $x_0= x_1$ then the output of a \discreteg-gadget is $0$ if and only if $x_0 = x_1 = -\frac{d}{c^2}$ or $x_0 = x_1 = \frac{1}{c}.$
		\item \label{prop:normg} If $x_0 = -\frac{d}{c^2}$ then the output of a \normg-gadget is $0$ and if $x_0 = \frac{1}{c}$ then the output is $-dc$.
		\item \label{prop:normnotg} If $x_0 = \frac{d}{c^2}$ then the output of \normnotg-gadget is $-dc$ and if $x_0 = -\frac{1}{c}$ then the output is $0$. 
		\item \label{prop:org} If $x_0=x_1=x_2=0$ then the output of an \orgA-gadget is $dc^4 - dc^3$. If at least one input is $-dc$ while the others are $0$ then the output is $dc^4$. The same holds for an \orgB-gadget with the difference that if $x_0=x_1=x_2=0$ then the output is $dc^4 - dc^5$.
	\end{enumerate}
\end{restatable}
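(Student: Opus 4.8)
The plan is to prove all four statements uniformly, by first turning each gadget in Fig.~\ref{sec:restricted_weights;fig:gadgets} into a closed-form, piecewise-linear expression in its inputs. This step is mechanical: an identity node with single incoming weight $w$, bias $b$ and incoming value $u$ outputs $wu+b$, a \ReLU-node outputs $\max(0,wu+b)$, a source node (with no incoming edge) and bias $b$ outputs the constant $b$, and a chain of $k$ identity nodes, each of weight $-c$ and bias $0$, multiplies its incoming value by $(-c)^k$. Composing these through the few layers of each gadget yields, assuming $c,d>0$ throughout: \discreteg{} computes $d - c\cdot\ReLU(-c x_0) - c\cdot\ReLU(d x_1)$; \normg{} computes $c^3\max(0,-x_0) - cd$; \normnotg{} computes $-c^3\max(0,x_0)$; and each of \orgA{} and \orgB{} computes $-c\cdot\ReLU\!\bigl(c^2(x_0+x_1+x_2)+c^k d\bigr) + c^4 d$, with $k=2$ for \orgA{} and $k=4$ for \orgB{}.

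For statement~\ref{prop:discreteg} I would substitute $x_0=x_1=x$ into the \discreteg{} formula and, using $c,d>0$, rewrite it as $g(x) = d - c^2\max(0,-x) - cd\max(0,x)$. This $g$ coincides with the affine function $d + c^2 x$ on $(-\infty,0]$ and with the affine function $d - cd\,x$ on $[0,\infty)$, both having value $d>0$ at $x=0$; hence $g$ is strictly increasing on $(-\infty,0]$ (slope $c^2>0$) and strictly decreasing on $[0,\infty)$ (slope $-cd<0$), so it has exactly one root in each of these half-lines. Solving $d+c^2 x=0$ and $d-cd\,x=0$ yields $x=-d/c^2<0$ and $x=1/c>0$; since the two half-lines cover $\Real$ and $g(0)=d\neq 0$ these are the only roots, which establishes the claimed equivalence. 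Statements \ref{prop:normg}--\ref{prop:org} are then pure evaluations of the closed forms. For \normg{} one checks $x_0=-d/c^2$, where $\max(0,-x_0)=d/c^2$ so the output is $c^3\cdot d/c^2 - cd = 0$, and $x_0=1/c$, where $\max(0,-x_0)=0$ so the output is $-cd$; \normnotg{} is handled the same way at $x_0=d/c^2$ and $x_0=-1/c$.

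For statement~\ref{prop:org} I would use that the three inputs enter symmetrically, so it suffices to evaluate at $(0,0,0)$ and at a tuple with $j\ge 1$ coordinates equal to $-dc$ and the rest $0$. At $(0,0,0)$ the \ReLU-argument is $c^k d>0$, so the \ReLU{} is active and the output equals $-c\cdot c^k d + c^4 d$, i.e.\ $dc^4 - dc^3$ for $k=2$ and $dc^4 - dc^5$ for $k=4$. In the other case the \ReLU-argument becomes $c^2 d\,(c^{k-2} - cj)$; for \orgA{} ($k=2$) this is $c^2 d(1-cj)\le 0$ precisely because that gadget is used only when $c\ge 1$, and for \orgB{} ($k=4$) it is $c^2 d(c^2 - cj)\le 0$ precisely because that gadget is used only when $c<1$. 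In both situations the \ReLU{} is inactive, so the output is $0+c^4 d = dc^4$. This case distinction mirrors the $c\ge 1$ versus $c<1$ dichotomy built into the definition of $N_{-c,d,\psi}$ and is exactly the reason for keeping two variants of the OR-gadget.

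The argument has no conceptual difficulty; its only real obstacle is bookkeeping. One has to carry a product of several weights of value $-c$ through each internal chain --- which is where the powers of $c$ (up to $c^5$) and their signs originate --- and, for every input substituted into the \org-gadgets, double-check that the \ReLU{} activation forced by the regime $c\ge 1$ or $c<1$ is indeed the one that occurs. A single sign slip would propagate through the entire chain, so in the appendix I would first record the unfolded formula of each gadget once, and only then perform the (elementary) evaluations.
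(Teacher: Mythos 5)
Your proof is correct and follows essentially the same route as the paper's: read off a closed-form piecewise-linear expression for each gadget from the figure and then evaluate it at the stated inputs, with your closed forms agreeing with the paper's up to trivial rewriting (and your monotonicity argument for the ``only if'' direction of statement~1 being equivalent to the paper's sign case distinction). If anything you are slightly more explicit than the paper, since you point out that property~4 for \orgA{} (resp.\ \orgB) holds precisely because that gadget is only used in the regime $c\ge 1$ (resp.\ $c<1$).
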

\begin{proof}
	We start with property \ref{sec:restricted_weights;lem:gadgets}.\ref{prop:discreteg} and assume that the inputs $x_0, x_1$ are equal. We can infer from the depiction in Fig.~\ref{sec:restricted_weights;fig:gadgets} that the output of a \discreteg-gadget is given by $d-c\max(0,dx_0)-c\max(0,-cx_1)$. At this point we make a case distinction. If $x_0 = x_1 < 0 $ then the output is given by $d + c^2 x_1$ and equal to zero if and only if $x_1 = -\frac{d}{c^2}$. If $x_0 = x_1 > 0 $ then the output is given by $d - cdx_0$ and equal to zero if and only if $x_0 = \frac{1}{c}$. The last case, namely $x_0 = x_1 = 0$, leads to an output of $d$.
	
	The properties \ref{sec:restricted_weights;lem:gadgets}.\ref{prop:normg}, \ref{sec:restricted_weights;lem:gadgets}.\ref{prop:normnotg} and \ref{sec:restricted_weights;lem:gadgets}.\ref{prop:org} are easily argued. We can infer from Fig.~\ref{sec:restricted_weights;fig:gadgets} that the output of a \normg-gadget is given by $-c(d - c\max(0,-cx_0))$, the output of a \normnotg-gadget given by $-c\max(0,c^2x_0)$, the output of an \org-A-gadget given by $dc^4 - c\max(0,dc^2 + c^2\sum_{i=0}^{2} x_i)$ and the output of an OR-B-gadget given by $dc^4 - c\max(0,dc^4 + c^2\sum_{i=0}^{2} x_i)$. Then the statements about these gadgets follow by inserting the mentioned values and solving the equations. \qed
\end{proof}

With these properties at hand, we are suited to prove our main statement of this section.
\begin{theorem}
	Let $c,d \in \bbbq^{>0}$. \reach{} restricted to $\mathcal{C}\left(\{-c,0,d\}\right)$ is NP-hard.
	\label{sec:restricted_weights;th:np_compl}
\end{theorem}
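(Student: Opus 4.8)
The plan is to reduce from \threesat. Given $\psi$ with propositional variables $X_0,\dots,X_{n-1}$ and clauses $l_0,\dots,l_{m-1}$, I map it to the \reach{} instance $(N_{-c,d,\psi},\varphi_{\text{in}},\varphi_{\text{out}})$, where $N_{-c,d,\psi}$ is the network defined above (already of size linear in $|\psi|$ and in $\mathcal{C}(\{-c,0,d\})$), $\varphi_{\text{in}}$ is a trivially true simple specification, and $\varphi_{\text{out}} := \bigwedge_{i=0}^{n-1} z_i = 0 \land \bigwedge_{i=0}^{n-1} e_i = 0 \land y = m d^2 c^4$, which is simple and uses only rational constants since $c,d$ are fixed. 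The intended correspondence is: $x_i = \tfrac1c$ encodes "$X_i$ true" and $x_i = -\tfrac{d}{c^2}$ encodes "$X_i$ false".

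First I would record the behaviour of the \inverseeqg-gadget, which Lemma~\ref{sec:restricted_weights;lem:gadgets} does not cover but which is read off from Fig.~\ref{sec:restricted_weights;fig:gadgets} in the same way: it consists only of identity nodes, hence computes a \emph{linear} function of its two inputs $x_i,\overline{x_i}$, so the value $e_i$ at the end of the attached identity chain satisfies $e_i = 0$ iff $x_i + \overline{x_i} = 0$, i.e.\ iff $\overline{x_i} = -x_i$. Combined with Lemma~\ref{sec:restricted_weights;lem:gadgets}.\ref{prop:discreteg} (applied with both \discreteg-inputs equal to $x_i$, as they are by construction), requiring $z_i = 0$ and $e_i = 0$ pins $(x_i,\overline{x_i})$ to $(\tfrac1c,-\tfrac1c)$ or $(-\tfrac{d}{c^2},\tfrac{d}{c^2})$. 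By Lemma~\ref{sec:restricted_weights;lem:gadgets}.\ref{prop:normg},\ref{prop:normnotg}, the \normg-gadget on $x_i$ and the \normnotg-gadget on $\overline{x_i}$ then each output exactly $-dc$ (a "true" signal) or $0$ (a "false" signal), matching the literal they stand for; and since each OR-gadget output is $\ge 0$ in the relevant regime, the \andg-gadget outputs $d$ times the sum of its incoming OR-gadget outputs.

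For completeness, from a model $I$ of $\psi$ I set $x_i = \tfrac1c,\ \overline{x_i} = -\tfrac1c$ if $I(X_i)$ is true and $x_i = -\tfrac{d}{c^2},\ \overline{x_i} = \tfrac{d}{c^2}$ otherwise, and propagate through $N_{-c,d,\psi}$; then $z_i = e_i = 0$ by the above, and for every clause at least one of the literal-signals reaching its \orgA-/\orgB-gadget equals $-dc$, so by Lemma~\ref{sec:restricted_weights;lem:gadgets}.\ref{prop:org} — extended routinely to the case of two or three inputs equal to $-dc$, where the relevant $\max$-argument stays $\le 0$, which is exactly why \orgA{} is used for $c\ge1$ and \orgB{} for $c<1$ — each such gadget outputs $dc^4$, whence $y = m d^2 c^4$. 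Conversely, given any solution, Lemma~\ref{sec:restricted_weights;lem:gadgets}.\ref{prop:discreteg} and $z_i = 0$ give $x_i \in \{-\tfrac{d}{c^2},\tfrac1c\}$, defining an assignment $I$, and $e_i = 0$ gives $\overline{x_i} = -x_i$, so every literal-signal into the OR-gadgets lies in $\{0,-dc\}$ and equals $-dc$ precisely for the literals true under $I$. Since each OR-gadget output is $\le dc^4$ and $y$ is $d$ times their sum, $y = m d^2 c^4$ forces every OR-gadget to output $dc^4$, which by Lemma~\ref{sec:restricted_weights;lem:gadgets}.\ref{prop:org} is impossible when all three of its inputs are $0$; hence every clause receives a $-dc$ signal, i.e.\ has a literal true under $I$, so $I \models \psi$.

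The reduction is clearly polynomial and all specifications are simple, which gives the theorem. I expect the main effort to lie in two places: pinning down the \inverseeqg-gadget correctly — the crucial point being that it is a \emph{linear} gadget, so $e_i = 0$ is an \emph{exact} constraint $\overline{x_i} = -x_i$ rather than a one-sided inequality, which is what prevents a \normnotg-gadget from spuriously firing for a literal that $I$ falsifies (note $\overline{x_i}$, unlike $x_i$, is not independently forced onto a two-element set) — and the $c\ge1$ versus $c<1$ case split for the OR-gadgets, which is what makes the clause logic come out right once the only available weights are $\{-c,0,d\}$.
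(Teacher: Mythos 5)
Your proposal is correct and follows essentially the same reduction as the paper: the same network $N_{-c,d,\psi}$, the same specifications $\varphi_{\text{in}}=\top$ and $\varphi_{\text{out}}=\bigwedge_i z_i=0\land e_i=0\land y=m d^2c^4$, and the same two-directional argument via Lemma~\ref{sec:restricted_weights;lem:gadgets}. The details you flag as needing care (the linearity of the \inverseeqg-gadget making $e_i=0$ equivalent to $\overline{x_i}=-x_i$, and the behaviour of the \org-gadgets when more than one input is $-dc$) are exactly the points the paper's proof relies on, and your treatment of them is sound.
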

\begin{proof}
	Let $c,d \in \Rat^{>0}$. Take a \threesat-formula $\psi$ and consider 
	$(N_{-c,d,\psi}, \varphi_{\text{in}}, \varphi_{\text{out}})$ with $N_{-c,d,\psi}$ defined above, $\varphi_{\text{in}}=\top$ and 
	$\varphi_{\text{out}}=\bigwedge_{i=0}^{n-1} z_i = 0 \land e_i = 0 \land y = m\cdot d^2c^4$. Obviously, these specifications are simple.
	
	Clearly, $(N_{-c,d,\psi}, \varphi_{\text{in}}, \varphi_{\text{out}})$ can be constructed in time polynomial in the size of $\psi$. 
	For the correctness of the construction assume that $\psi$ has a model $I$. We claim that 
	$(N_{-c,d,\psi}, \varphi_{\text{in}}, \varphi_{\text{out}})$ is solved with $x_i = \frac{1}{c}$ if $I(X_i)$ is true, 
	$x_i = -\frac{d}{c^2}$ otherwise, and $\overline{x_i}=-x_i$. Note that $\varphi_{\text{in}}$ is trivially satisfied. 
	
	So apply these inputs to $N_{-c,d,\psi}$. According to Lem.~\ref{sec:restricted_weights;lem:gadgets}.\ref{prop:discreteg}, all outputs
	$z_i$ are $0$. It is easily verified that all outputs $e_i$ are $0$ as well. Thus, it is left to argue that $y = m\cdot d^2c^4$. 
	Consider one of the $\org_{\textsc{A}\mid\textsc{B}}$-gadgets occurring in $N_{-c,d,\psi}$, corresponding to a clause $l_j$. Its inputs 
	are given by the \normg- and \normnotg-gadgets connected to the inputs $x_i$, resp.\ $\overline{x_i}$ corresponding to the $X_i$ 
	occurring in $l_j$. According to Lem.~\ref{sec:restricted_weights;lem:gadgets}.\ref{prop:normg} and %Lem.~\ref{sec:restricted_weights;lem:gadgets}.
	\ref{prop:normnotg} these inputs are either $0$ or $-dc$. If $l_j$ is satisfied by $I$ then there is at least one input to the 
	$\org_{\textsc{A}\mid\textsc{B}}$-gadget that is equal to $-dc$. From the fact that $\psi$ is satisfied by $I$ and 
	Lem.~\ref{sec:restricted_weights;lem:gadgets}.\ref{prop:org} it follows that each $\org_{\textsc{A}\mid\textsc{B}}$-gadget outputs $dc^4$.
	Therefore, the output $y$ of $N_{-c,d,\psi}$ is $m \cdot d^2c^4$. This means that $\varphi_{\text{out}}$ is valid as well. 
	
	Consider now the converse direction. A solution for $(N_{-c,d,\psi}, \varphi_{\text{in}}, \varphi_{\text{out}})$ must yield that all $x_i$ are $\frac{1}{c}$ or $-\frac{d}{c^2}$ and $x_i = \overline{x_i}$ as all $z_i$ and $e_i$ have to equal $0$. Therefore, all $m$ $\org_{\textsc{A}\mid\textsc{B}}$-gadgets have to output $dc^4$ as $y$ must equal $m\cdot d^2c^4$. This implies that each $\org_{\textsc{A}\mid\textsc{B}}$-gadget has at least one input that is $-dc$ which in turn means that there is at least one indirectly connected $x_i$ or $\overline{x_i}$ that is $\frac{1}{c}$ resp. $\frac{d}{c^2}$. Thus, $\psi$ is satisfied by setting $X_i$ true if $x_i = \frac{1}{c}$ and false if $x_i = -\frac{d}{c^2}$.
	\qed
\end{proof}

If $d=c$ and we allow for arbitrary specifications we can show that $0$ as a value for weights or biases is unnecessary to keep the lower bound.
\begin{theorem}
	Let $c \in \Rat^{>0}$. \reach is NP-hard for NN in $\mathcal{C}\left(\{-c,c\}\right)$ and arbitrary specifications.
	\label{sec:restricted_weights;th:no_zero}
\end{theorem}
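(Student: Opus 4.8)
The plan is to give a direct polynomial-time reduction from \threesat{}, building a network with a single hidden layer in which \emph{every} node is fully connected to the previous layer with weights in $\{-c,c\}$ and carries a bias in $\{-c,c\}$. Two mechanisms make this possible despite the absence of zero weights. First, each propositional variable $X_i$ gets \emph{two} input nodes $x_i^{(0)},x_i^{(1)}$, which $\varphi_{\text{in}}$ pins to be equal; since a node of the hidden layer may weight this pair by $(c,c)$, $(-c,-c)$ or $(c,-c)$, it sees an \emph{effective} coefficient of $2c$, $-2c$ or $0$ on $X_i$, so it can ignore the variables it does not care about. Second, the input layer also carries a fixed number $t$ of auxiliary nodes pinned to $1$ by $\varphi_{\text{in}}$; together with the bias, weighting these $t+1$ ``unit contributions'' by elements of $\{-c,c\}$ lets a hidden node's constant term be any $c\cdot s$ with $s$ an integer of parity $t+1$ and $|s|\le t+1$, which for $t=4$ already covers all constants we need.

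In the hidden layer (all \ReLU{} nodes) I would place, for every variable $X_i$, two nodes computing $p_i=\ReLU(2cx_i-c)$ and $q_i=\ReLU(c-2cx_i)$, so that $p_i+q_i=c\,|2x_i-1|$ and hence $p_i+q_i=c$ exactly when $x_i\in\{0,1\}$; and, for every clause $l_j$, one node $h_j=\ReLU\bigl((\text{literal terms of }l_j)+(\text{constant})\bigr)$, where a positive occurrence of $X_a$ contributes the effective term $2c\,x_a$, a negative occurrence of $X_b$ contributes $2c-2c\,x_b$, every other variable is given effective coefficient $0$, and the constant term, which equals $c\cdot(2\cdot\#\{\text{negative literals in }l_j\}-1)\in\{-c,c,3c,5c\}$, is produced by the auxiliary inputs and the bias. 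A short calculation shows that on $x\in\{0,1\}^n$ one has $h_j>0$ iff $l_j$ is satisfied and $h_j=0$ otherwise. The output layer (identity nodes) is then made fully connected with weights in $\{-c,c\}$ and bias $c$, in such a way that every hidden value is recoverable: with one node $Y_0$ whose weights on all hidden nodes are $c$, and for each hidden node $H_k$ one node $Y_k$ that agrees with $Y_0$ except for having weight $-c$ instead of $c$ on $H_k$, one gets $Y_0-Y_k=2c\,H_k$. Finally I would take $\varphi_{\text{in}}$ as just described and
\[
\varphi_{\text{out}}\ :=\ \bigwedge_{i=0}^{n-1}\Bigl(\tfrac{Y_0-Y_{p_i}}{2c}+\tfrac{Y_0-Y_{q_i}}{2c}=c\Bigr)\ \land\ \bigwedge_{j=0}^{m-1}\ \tfrac{Y_0-Y_{h_j}}{2c}\ \ge\ c ,
\]
which, after clearing the (rational) denominators, is a legitimate --- though of course not simple --- specification.

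Correctness is then routine. A model $I$ of $\psi$ induces a solution by setting both copies of each $x_i$ to $I(X_i)\in\{0,1\}$ and all auxiliary inputs to $1$; conversely, in any solution the first conjunct of $\varphi_{\text{out}}$ together with $Y_0-Y_{p_i}=2c\,p_i$ and $Y_0-Y_{q_i}=2c\,q_i$ forces $p_i+q_i=c$, hence every $x_i\in\{0,1\}$, and then the second conjunct forces each $h_j\ge c>0$, i.e.\ every clause to be satisfied. Membership of the constructed network in $\mathcal{C}(\{-c,c\})$ is immediate from the construction, and the whole instance is clearly computable in polynomial time.

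I expect the main effort to lie not in this outer argument but in two bookkeeping points forced by forbidding zero weights: verifying that a single fixed number of duplicated variable inputs (two each) and auxiliary unit inputs ($t=4$) simultaneously realises every effective coefficient in $\{-2c,0,2c\}$ and every constant offset in $\{-c,c,3c,5c\}$ that the hidden nodes require --- this is really a parity check --- and checking that the fully connected output layer can nonetheless be ``read'' coordinate by coordinate through the specification, which is exactly the place where allowing arbitrary (non-simple) specifications is indispensable.
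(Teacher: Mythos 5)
Your construction is correct, but it takes a genuinely different route from the paper. The paper does not build a fresh network: it takes the eight-layer gadget network $N_{-c,c,\psi}$ from the preceding theorem and patches away every zero, pairing each input $x_i$ with a negated twin $\overline{x_i}$ (constrained by $\varphi_{\text{in}}$ to satisfy $x_i=-\overline{x_i}$) so that weighting both copies by $c$ simulates a zero weight, duplicating every hidden node and weighting the copy's output by $-c$ to cancel unwanted edges between internal layers, and introducing dedicated bias-cancellation inputs feeding chains of identity nodes to neutralise the forced nonzero biases; the output constraint is then rescaled to $y=2^7(m\cdot c^6)$. You instead design a single-hidden-layer network from scratch: you pin the two copies of each variable \emph{equal} rather than opposite (either convention works), realise all constant terms through a fixed stock of unit-valued auxiliary inputs via a parity argument (which correctly covers the needed odd multiples $-c,c,3c,5c$ with $t=4$), discretise with $p_i+q_i=c\lvert 2x_i-1\rvert$, and --- this is your nicest idea, replacing both the paper's node-duplication and its identity chains --- read off each hidden value through the fully connected output layer as $Y_0-Y_k=2cH_k$, pushing all the decoding into the (legitimately non-simple) output specification. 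Your route is arguably cleaner and yields a strictly stronger statement, since it establishes NP-hardness for $\mathcal{C}(\{-c,c\})$ already with one hidden layer, effectively combining Thm.~\ref{sec:low_compl_fagments;th:one_layer_output_one} and Thm.~\ref{sec:restricted_weights;th:no_zero}; the price is a larger output dimension ($2n+m+1$ nodes) where the paper keeps essentially the original gadget outputs. The verification steps you defer (the parity bookkeeping and the readability of the fully connected output layer) are exactly the right ones and both go through.
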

\begin{proof}
	This is done in the same way as the proof of Thm.~\ref{sec:restricted_weights;th:np_compl} with some slight modifications. We only sketch this reduction by describing the differences compared to the instances $(N_{-c,c,\psi}, \varphi_{\text{in}}, \varphi_{\text{out}})$ resulting from the reduction used in Thm.~\ref{sec:restricted_weights;th:np_compl}.
	
	We do not use \inverseeqg-gadgets in the network but add for each input $x_i$ the conjunct $x_i = -\overline{x_i}$ to the input specification $\varphi_{\text{in}}$. This also means that we do not include $\bigwedge_{i=0}^{n-1} e_i = 0$ in the output specification $\varphi_{\text{out}}$. Consider the weights between the input and the first hidden layer. If the inputs $x_i$ and $\overline{x_i}$ were originally weighted with zero we set the weights corresponding to $x_i$ and $\overline{x_i}$ to be $c$. In combination with the input constraint $x_i = -\overline{x_i}$ this is equal to weighting $x_i$ and $\overline{x_i}$ with zero. 
	If $x_i$ ($\overline{x_i}$) was originally weighted with $c$ we have to set the weight of $\overline{x_i}$ ($x_i$) to be $-c$. If it was weighted with $-c$ we have to set the weight of its counterpart to be $c$. This leads to the case that all non-zero inputs of a node in the first hidden layer are doubled compared to the same inputs in a network $N_{-c,c,\psi}$.
	Consider now the weights between two layers $l$ and $l+1$ with $l > 0$. For each node in $l$ we add a node in the same layer with the same input weights. If the output of a node in layer $l$ was originally weighted with zero then we weight it with $c$ and the corresponding output of its copy with $-c$. If the output was originally weighted with weight $c$ ($-c$) then we weight the output of the copy node with $c$ ($-c$), too. As before, this doubles the input values at the nodes in layer $l+1$, which means that compared to a network $N_{-c,c,\psi}$ the output value of our modified network is multiplied by $2^7$. Thus, we have to change the output constraint of $y$ to be $y = 2^7(m \cdot c^6)$. Note that these modifications give a network using only the weights $-c$ and $c$.
	
	To get rid of zero bias, we add the inputs $x_{\text{bias},1}, \overline{x_{\text{bias},1}}, \dotsc , x_{\text{bias},7},\overline{x_{\text{bias},7}}$ to the network and add the input constraints $x_{\text{bias},i} = -\sum_{j=0}^{i-1}\frac{1}{2^{j+1}c^j}$ and $x_{\text{bias},i} = - \overline{x_{\text{bias},i}}$ to $\varphi_{\text{in}}$. Then, we set the bias of all nodes which originally had a zero bias to be $c$. For $x_{\text{bias,i}}$ with $i > 1$ we add a chain of $i-1$ identity nodes each with bias $c$ and interconnected with weight $c$ and connect this chain with weight $c$ to $x_{\text{bias},i}$ and $-c$ to $\overline{x_{\text{bias},i}}$. All other weights are assumed to be zero which is realized using the same techniques as described in the previous paragraph. If a node in the first hidden layer originally had a zero bias we weight the input $x_{\text{bias},1}$ with $c$ and $\overline{x_{\text{bias},i}}$ with $-c$. If the input specification holds then the bias plus these inputs sums up to zero. If a node in some layer $l \in\{2, \dotsc, 7\}$ originally had a zero bias we weight the output of the last node of the chain corresponding to $x_{\text{bias},l}$ and its copy with $c$. Again, if the input specification holds, the bias value of this node is nullified. This modification ensures that the network is from $\mathcal{C}(\{-c,c\})$. \qed
\end{proof}
% !TEX root =  ../main.tex

\section{Conclusion}
\label{sec:conclusion}
We investigated the computational complexity of the reachability problem for NN with ReLU and identity activations. We revised the original
proof of its NP-completeness, fixing flaws in both the upper and lower bound, and showed that the parameter driving NP-hardness is the
number of ReLU nodes. Furthermore, we showed that \reach is difficult for very restricted classes of small NN already, respectively that
three parameters of different signum occurring as weights and biases suffice for NP-hardness. This indicates that finding non-trivial 
classes of NN with practical relevance and polynomial \reach{} is unlikely. 
 
It remains to be seen whether NP-hardness can be strengthened, for instance for classes of NN with a single hidden layer and a maximum of two non-zero inputs to ReLU nodes, or only one arbitrary positive and only one arbitrary negative weight and bias value.
However, possible results here are only of theoretical interest. 

From a practical perspective, it would be interesting to see if pure ReLU
networks, where every node in a hidden layer has a ReLU activation, lead to similar results as these are more common in practice. Also,
investigating the fixed-parameter tractability of the problem more broadly could be promising. It remains to be seen whether there
are parameters other than the number of ReLU nodes, like structural properties or dimensionality, whose fixing leads to polynomial 
decidability. 
This could yield efficiently solvable classes of NN that are also of practical interest.
%\input{sections/appendix}
%
% ---- Bibliography ----
%
% BibTeX users should specify bibliography style 'splncs04'.
% References will then be sorted and formatted in the correct style.
%
\bibliographystyle{splncs04}
\bibliography{ref}

\begin{thebibliography}{10}
\providecommand{\url}[1]{\texttt{#1}}
\providecommand{\urlprefix}{URL }
\providecommand{\doi}[1]{https://doi.org/#1}

\bibitem{BunelTTKM18}
Bunel, R., Turkaslan, I., Torr, P.H.S., Kohli, P., Mudigonda, P.K.: A unified
  view of piecewise linear neural network verification. In: Bengio, S.,
  Wallach, H.M., Larochelle, H., Grauman, K., Cesa{-}Bianchi, N., Garnett, R.
  (eds.) Advances in Neural Information Processing Systems 31: Annual
  Conference on Neural Information Processing Systems 2018, NeurIPS 2018,
  December 3-8, 2018, Montr{\'{e}}al, Canada. pp. 4795--4804 (2018),
  \url{https://proceedings.neurips.cc/paper/2018/hash/be53d253d6bc3258a8160556dda\linebreak3e9b2-Abstract.html}

\bibitem{DixonKB17}
Dixon, M., Klabjan, D., Bang, J.H.: Classification-based financial markets
  prediction using deep neural networks. Algorithmic Finance  \textbf{6}(3-4),
  67--77 (2017). \doi{10.3233/AF-170176}

\bibitem{Ehlers17}
Ehlers, R.: Formal verification of piece-wise linear feed-forward neural
  networks. In: D'Souza, D., Kumar, K.N. (eds.) Automated Technology for
  Verification and Analysis - 15th International Symposium, {ATVA} 2017, Pune,
  India, October 3-6, 2017, Proceedings. Lecture Notes in Computer Science,
  vol. 10482, pp. 269--286. Springer (2017).
  \doi{10.1007/978-3-319-68167-2\_19}

\bibitem{GrigorescuTCM20}
Grigorescu, S.M., Trasnea, B., Cocias, T.T., Macesanu, G.: A survey of deep
  learning techniques for autonomous driving. J. Field Robotics
  \textbf{37}(3),  362--386 (2020). \doi{10.1002/rob.21918}

\bibitem{X12a}
Hinton, G., Deng, L., Yu, D., Dahl, G.E., Mohamed, A.r., Jaitly, N., Senior,
  A., Vanhoucke, V., Nguyen, P., Sainath, T.N., et~al.: Deep neural networks
  for acoustic modeling in speech recognition: The shared views of four
  research groups. {IEEE} Signal Process. Mag.  \textbf{29}(6),  82--97 (2012).
  \doi{10.1109/MSP.2012.2205597}

\bibitem{HuangKRSSTWY20}
Huang, X., Kroening, D., Ruan, W., Sharp, J., Sun, Y., Thamo, E., Wu, M., Yi,
  X.: A survey of safety and trustworthiness of deep neural networks:
  Verification, testing, adversarial attack and defence, and interpretability.
  Comput. Sci. Rev.  \textbf{37},  100270 (2020).
  \doi{10.1016/j.cosrev.2020.100270}

\bibitem{Karmarkar84}
Karmarkar, N.: A new polynomial-time algorithm for linear programming. Comb.
  \textbf{4}(4),  373--396 (1984). \doi{10.1007/BF02579150}

\bibitem{KatzBDJK17}
Katz, G., Barrett, C.W., Dill, D.L., Julian, K., Kochenderfer, M.J.: Reluplex:
  An efficient {SMT} solver for verifying deep neural networks. In: Majumdar,
  R., Kuncak, V. (eds.) Computer Aided Verification - 29th International
  Conference, {CAV} 2017, Heidelberg, Germany, July 24-28, 2017, Proceedings,
  Part {I}. Lecture Notes in Computer Science, vol. 10426, pp. 97--117.
  Springer (2017). \doi{10.1007/978-3-319-63387-9\_5}

\bibitem{Katz21}
Katz, G., Barrett, C.W., Dill, D.L., Julian, K., Kochenderfer, M.J.: Reluplex:
  a calculus for reasoning about deep neural networks. Form Methods Syst Des
  (2021). \doi{10.1007/s10703-021-00363-7}

\bibitem{KhanSZQ20}
Khan, A., Sohail, A., Zahoora, U., Qureshi, A.S.: A survey of the recent
  architectures of deep convolutional neural networks. Artif. Intell. Rev.
  \textbf{53}(8),  5455--5516 (2020). \doi{10.1007/s10462-020-09825-6}

\bibitem{KrizhevskySH17}
Krizhevsky, A., Sutskever, I., Hinton, G.E.: Imagenet classification with deep
  convolutional neural networks. Commun. {ACM}  \textbf{60}(6),  84--90 (2017).
  \doi{10.1145/3065386}

\bibitem{LitjensKBSCGLGS17}
Litjens, G., Kooi, T., Bejnordi, B.E., Setio, A.A.A., Ciompi, F., Ghafoorian,
  M., van~der Laak, J.A.W.M., van Ginneken, B., S{\'{a}}nchez, C.I.: A survey
  on deep learning in medical image analysis. Medical Image Anal.  \textbf{42},
   60--88 (2017). \doi{10.1016/j.media.2017.07.005}

\bibitem{NarodytskaKRSW18}
Narodytska, N., Kasiviswanathan, S.P., Ryzhyk, L., Sagiv, M., Walsh, T.:
  Verifying properties of binarized deep neural networks. In: McIlraith, S.A.,
  Weinberger, K.Q. (eds.) Proceedings of the Thirty-Second {AAAI} Conference on
  Artificial Intelligence, (AAAI-18), the 30th innovative Applications of
  Artificial Intelligence (IAAI-18), and the 8th {AAAI} Symposium on
  Educational Advances in Artificial Intelligence (EAAI-18), New Orleans,
  Louisiana, USA, February 2-7, 2018. pp. 6615--6624. {AAAI} Press (2018),
  \url{https://www.aaai.org/ocs/index.php/AAAI/AAAI18/paper/view/16898}

\bibitem{RuanHK18}
Ruan, W., Huang, X., Kwiatkowska, M.: Reachability analysis of deep neural
  networks with provable guarantees. In: Lang, J. (ed.) Proceedings of the
  Twenty-Seventh International Joint Conference on Artificial Intelligence,
  {IJCAI} 2018, July 13-19, 2018, Stockholm, Sweden. pp. 2651--2659. ijcai.org
  (2018). \doi{10.24963/ijcai.2018/368}

\bibitem{RuanHK18_arxiv_version}
Ruan, W., Huang, X., Kwiatkowska, M.: Reachability analysis of deep neural
  networks with provable guarantees. CoRR  \textbf{abs/1805.02242} (2018),
  \url{http://arxiv.org/abs/1805.02242}

\end{thebibliography}
\end{document}